\date{\small\today}
\let\old@subsection\subsection
\def\subsection#1{\old@subsection[#1]{#1.}}
\newcommand\todoin[2][]{\todo[inline, caption={2do}, #1]{
	\begin{minipage}{\linewidth-1em}\noindent\relax#2\end{minipage}}}
\def\EndIf{\End\li\kw{end if} }
\def\EndWhile{\End\li\kw{end while} }
\newcommand{\qed}{\hfill$\square$}
\let\oldendproof\endproof
\renewcommand{\endproof}{\qed\oldendproof}
\newcommand\weakemph[1]{\textsl{#1}}
\newdimen\makeboxdimen
\newcommand\makeboxlike[3][l]{%
\setbox0=\hbox{#2}%
\global\makeboxdimen=\wd0%
\setbox1=\hbox{\makebox[\makeboxdimen][#1]{%
\makebox[0pt][#1]{#3}%
}}%
\ht1=\ht0%
\dp1=\dp0%
\box1%
}
\newcommand\plaincenter[1]{%
	\mbox{}\hfill#1\hfill\mbox{}%
}
\newcounter{inlineenum}
\newcommand*\ie{\mbox{i.\hspace{.2ex}e.}}
\newcommand*\eg{\mbox{e.\hspace{.2ex}g.}}
\newcommand\E{\mathop{\mbox{$\mathbb{E}$}}\nolimits}
\newcommand\given{\;|\;}
\newcommand\R{\mathbb R}
\newcommand\N{\mathbb N}
\newcommand\Z{\mathbb Z}
\renewcommand\C{\mathbb{C}}
\newcommand\Oh{O}
\def\.{\mskip1mu}
\newcommand{\Prob}{\ensuremath{\mathbb{P}}}
\newcommand{\ui}[2]{%
	\mathchoice{
		#1^{(#2)}%
	}{
		#1^{%
			\smash{\raisebox{.75pt}{\mbox{\tiny$($}}}%
			#2%
			\smash{\raisebox{.75pt}{\mbox{\tiny$)$}}}%
		}%
	}{
		#1^{\smash{(}\rule[-.75pt]{0pt}{2.75pt}#2\smash{)}}%
	}{
		#1^{(#2)}%
	}%
}
\newcommand{\vect}[1]{\boldsymbol{\mathbf{#1}}}
\newcommand\eqdist{	
	\mathchoice{
		\mathrel{\overset{\raisebox{0ex}{$\scriptstyle \cal D$}}=}%
	}{
		\mathrel{\like{=}{%
			\overset{\raisebox{-1ex}{$\scriptscriptstyle \cal D$}}=%
		}}%
	}{
		\mathrel{\overset{\cal D}=}%
	}{
		\mathrel{\overset{\cal D}=}%
	}%
}
\newcommand\ppe{\phantom{=}}
\newcommand\like[3][c]{\makeboxlike[#1]{\ensuremath{#2}}{\ensuremath{#3}}}
\newcommand\uniform{\mathcal U}
\newcommand\dirichlet{\mathrm{Dir}}
\newcommand\gammadist{\mathrm{Gamma}}
\newcommand\BetaFun{\mathrm B}
\renewcommand\given{\mathbin{\mid}}
\newcommand\harm[1]{\ensuremath{H_{#1}}}
\newcommand\ce{\colonequals}
\newcommand\rel[1]{\mathrel{\:{#1}\:}}
\newcommand\wrel[1]{\mathrel{\;{#1}\;}}
\newcommand\wwrel[1]{\mathrel{\;\;{#1}\;\;}}
\newcommand\bin[1]{\mathbin{\:{#1}\:}}
\newcommand\eqwithref[2][c]{%
	\relwithref[#1]{#2}{=}%
}
\newcommand\relwithref[3][c]{%
	\mathrel{\underset{\mathclap{\makebox[\widthof{$=$}][#1]{\scriptsize\wref{#2}}}}{#3}}%
}
\newcommand\relwithtext[3][c]{%
	\mathrel{\underset{\mathclap{\makebox[\widthof{$=$}][#1]{\scriptsize#2}}}{#3}}%
}
\newcommand\toll[2][]{%
	\ensuremath{%
	\ifthenelse{\equal{#1}{}}{%
		T_{\!#2}
	}{%
		T_{\!#2}({#1})%
	}}%
}
\newcommand\istoll[2][]{%
	\ensuremath{%
	\ifthenelse{\equal{#1}{}}{%
		\iscost_{\!#2}
	}{%
		\iscost_{\!#2}({#1})%
	}}%
}
\newcommand\insertsortcost{W}
\newcommand\iscost{\insertsortcost}
\newcommand\bytecodes{\mathit{BC}}
\newcommand\branchmisses{\mathit{BM}}
\newcommand\tp{\alpha}
\newcommand\kp{\kappa}
\newcommand\rf[2]{#1^{\overline{#2}}}
\newcommand\dirichletExpectation[2]{%
	\ensuremath{%
		\E_{D(#2)}\!\left[ #1 \right]%
	}%
}
\newcommand{\geoDirichletExp}[3][1]{\ui{\gamma_{#2,\.#3}}{#1}}
\newcommand\discreteEntropy[1][\vect t]{%
	\ensuremath{\mathchoice{
		{\mathcal{H}} \ifthenelse{\equal{#1}{}}{ }{ (#1) }
	}{
		{\mathcal{H}} \ifthenelse{\equal{#1}{}}{ }{ (#1) }
	}{
		{\mathcal{H}} \ifthenelse{\equal{#1}{}}{ }{ (#1) }
	}{
		{\mathcal{H}} \ifthenelse{\equal{#1}{}}{ }{ (#1) }
	}}\xspace%
}
\newcommand\contentropy[1][\vect\tau]{%
	\ensuremath{\mathchoice{
		{\mathcal{H}^*} \ifthenelse{\equal{#1}{}}{ }{ (#1) }
	}{
		{\mathcal{H}^*} \ifthenelse{\equal{#1}{}}{ }{ (#1) }
	}{
		{\mathcal{H}}^* \ifthenelse{\equal{#1}{}}{ }{ (#1) }
	}{
		{\mathcal{H}}^* \ifthenelse{\equal{#1}{}}{ }{ (#1) }
}}\xspace%
}
\newcommand\arrayA{%
	\ensuremath{\mathchoice{
		\smash{\raisebox{-.2pt}{\scalebox{1.25}[1.18]{$\mathtt{A}$}}}%
	}{
		\smash{\raisebox{-.2pt}{\scalebox{1.25}[1.18]{$\mathtt{A}$}}}%
	}{
		\smash{\raisebox{-.2pt}{\scalebox{1.25}[1.18]{$\scriptstyle\mathtt{A}$}}}%
	}{
		\smash{\raisebox{-.2pt}{\scalebox{1.25}[1.18]{$\scriptscriptstyle\mathtt{A}$}}}%
	}}\xspace%
}
\newcommand\bmprob[2][]{%
	\ifthenelse{\equal{#1}{}}{%
		\ifthenelse{\equal{#2}{}}{%
			\mathbb{P}_{\mkern-3mu\scriptscriptstyle\textrm{BM}}%
		}{%
			\ui { \mathbb{P}_{\mkern-3mu\scriptscriptstyle\textrm{BM}} } {#2}%
		}%
	}{%
		\ifthenelse{\equal{#2}{}}{%
			\mathbb{P}_{\mkern-3mu\scriptscriptstyle#1}%
		}{%
			\ui { \mathbb{P}_{\mkern-3mu\scriptscriptstyle#1} } {#2}%
		}%
	}%
}
\newcommand\bmonebit{\text{1-bit}}
\newcommand\bmtwobitsc{\text{2-bit\,sc}}
\newcommand\bmtwobitfc{\text{2-bit\,fc}}
\newcommand\steadystatemissrate[1]{%
	\ifthenelse{\equal{#1}{}}{%
		f
	}{%
		f_{#1}
	}
}
\newcommand{\steadystatemissrateonebit}{\steadystatemissrate{\bmonebit}}
\newcommand{\steadystatemissratetwobitsc}{\steadystatemissrate{\bmtwobitsc}}
\newcommand{\steadystatemissratetwobitfc}{\steadystatemissrate{\bmtwobitfc}}
\newcommand\btprob[1]{%
	\ui { \mathbb{P}_{\mkern-3mu\scriptscriptstyle\textrm{taken}} } {#1}%
}
\newcommand\warning{\raisebox{-.7ex}{\scalebox{.66}{\textdbend}}}
\newcommand*\generalYaros[2]{%
	\ensuremath{\mathrm{YQS}_{#1}^{#2}}\xspace%
}
\newcommand*\generalClassic[2]{%
	\ensuremath{\mathrm{CQS}_{#1}^{#2}}\xspace%
}
\newcommand*\generalYarostM{%
	\generalYaros{\vect t}{\isthreshold}%
}
\newcommand*\generalClassictM{%
	\generalClassic{\vect t}{\isthreshold}%
}
\newcommand*\isthreshold{\ensuremath{\mathnormal{w}}\xspace}
\let\oldparagraph\paragraph
\renewcommand\paragraph{%
    \@ifstar{\myparagraphStar}{\myparagraphNoStar}%
}
\newcommand\myparagraphStar[1]{%
	\oldparagraph*{#1.}%
}
\newcommand\myparagraphNoStar[2][]{%
	\ifthenelse{\equal{#1}{}}{%
		\oldparagraph[#2]{#2.}%
	}{%
		\oldparagraph[#1]{#2.}%
	}%
}
\colorlet{symmetriccolor}{black!10}
\title{Analysis of Branch Misses in Quicksort%
	\thanks{%
		Part of this research was done during a visit at UPC,
		for which the second and third authors acknowledge support by 
		project TIN2007-66523
		\textsl{Formal methods and algorithms for system design (FORMALISM)} 
		of the Spanish Ministry of Economy and Competitiveness
	}
}
\author{%
	Conrado Mart\'\i nez%
	\thanks{%
		Department of Computer Science,
		Univ.\ Polit\`ecnica de Catalunya,
		Email: \texttt{conrado@cs.upc.edu}%
	} 
\and 
	Markus E.\ Nebel%
	\thanks{%
		Computer Science Department, 
		University of Kaiserslautern,
		Email: \texttt{\{wild,nebel\}@cs.uni-kl.de}%
	}
	\thanks{%
		Department of Mathematics and Computer Science, 
		University of Southern Denmark%
	}
\and 
	Sebastian Wild${}^{\@fnsymbol3}$}
\begin{document}

\maketitle

\begin{abstract}
The analysis of algorithms mostly relies on counting classic 
elementary operations like additions, multiplications, comparisons, swaps etc. 
This approach is often sufficient to quantify an algorithm's efficiency. 
In some cases, however, features of modern processor architectures like 
pipelined execution and memory hierarchies have significant impact on running time
and need to be taken into account to get a reliable picture. 
One such example is Quicksort: 
It has been demonstrated experimentally that under certain conditions on the hardware 
the classically optimal balanced choice of the pivot as median of a sample 
gets \emph{harmful}. 
The reason lies in mispredicted branches whose rollback costs become dominating.

In this paper, we give the first precise analytical investigation of 
the influence of pipelining and the resulting branch mispredictions 
on the efficiency of (classic) Quicksort 
and Yaroslavskiy's dual-pivot Quicksort as implemented in Oracle's Java~7 library. 
For the latter it is still not fully understood why experiments prove it
$10\,\%$ faster than a highly engineered implementation of a classic single-pivot version. 
For different branch prediction strategies, we give precise asymptotics 
for the expected number of branch misses caused by the aforementioned Quicksort variants 
when their pivots are chosen from a sample of the input.
We conclude that the difference in branch misses is too small to explain the 
superiority of the dual-pivot algorithm.
\end{abstract}

\todoin[disable]{%
	\begin{itemize}
	\item
		Motivation for our paper (contributions):
		\begin{enumerate}
			\item
				make analyses of \cite{Kaligosi2006branch} and \cite{Biggar2008} more precise
				(they do not report precise leading terms; 
				they either do not consider sampling at all, or only for $k\to\infty$).
			\item
				exclude BM as possible explanation for Yaroslavskiy's superior running time
				(which gives further evidence for the memory hierarchy hypothesis;
				We should avoid the term ``cache miss'' as possibly the bandwidth to 
				main memory is the actual limiting factor, 
				not the cache \emph{misses} themselves
				(buzzword: prefetching of cache lines in linear scans))
			\item
				As shown experimentally, skewed pivots may improve overall performance 
				in cases where BM are very expensive.
				With linear combination cost measure $\mathit{BC} + \xi \mathit{BM}$, we
				can study the overall behavior dependent on $\xi$; 
				find optimal skews as function of $\xi$
			\item
				precise analysis of finite size samples allows to optimize $\vect t$
				for practical $k$
		\end{enumerate}

	\item
		Discuss results:
		\begin{itemize}
		\item Classic Quicksort
			\begin{itemize}
			\item
				threshold for $\xi$ from where on skewed sampling helps
			\item
				optimize $\vect \tau$ as function of $\xi$
			\item
				also optimize $\vect t$ as function of $\xi$ for practical $k$?
			\end{itemize}
		\item Yaroslavskiy
			\begin{itemize}
			\item
				compare BM counts with CQS $\leadsto$
				without sampling: CQS slightly better 
				$\leadsto$ BM not explanation \\
				What happens with sampling?
			\item
				threshold for $\xi$ from where on differently skewed sampling helps
			\item
				optimize $\vect \tau$ as function of $\xi$
			\item
				also optimize $\vect t$ as function of $\xi$ for practical $k$?
			\end{itemize}
		\end{itemize}
		\bigskip
		\item 
			curious fact: cmps and fresh elements coincide for CQS;
			as do swaps and BM (up to constant factor)
	\end{itemize}
	
}

\section{Introduction}

\todoin[disable]{%
	\begin{itemize}
	\item[] \textbf{Introduction}
	\item
		cite previous analysis results on CQS and YQS, 
		including important general results, but focus on BM.
	\item
		Motivation for our paper (contributions):
		\begin{enumerate}
			\item
				As shown experimentally, skewed pivots may improve overall performance 
				in cases where BM are very expensive.
				With linear combination cost measure $\mathit{BC} + \xi \mathit{BM}$, we
				can study the overall behavior dependent on $\xi$; 
				find optimal skews as function of $\xi$
			\item
				precise analysis of finite size samples allows to optimize $\vect t$
				for practical $k$
		\end{enumerate}
	\item[\warning{}] 
		State explicitly what has been done before, 
		esp.\ on YQS and what is new \dots
	\end{itemize}
}

\textsl{Quicksort (QS)} is one of the most intensively used sorting algorithms, \eg,  
as the default sorting method in the 
standard libraries of C, C++, Java and Haskell. 
Classic Quicksort (CQS) uses one element of the input as \textsl{pivot}~$P$ 
according to which the input is 
\emph{partitioned} into the elements smaller than~$P$ and the ones larger than~$P$, 
which are then sorted recursively by the same procedure.

The choice of the pivot is essential for the efficiency of Quicksort. 
If we always use the smallest or largest element of the (sub-)array, 
quadratic runtime results,
whereas using the median gives an (asymptotically) 
comparison-optimal sorting algorithm.
Since the precise computation of the median is too expensive, 
\textsl{sampling strategies} have been invented: 
out of a sample of $k$~randomly selected elements of the input, 
a certain \textsl{order statistic} is selected
as the pivot\,---\,the so-called \textsl{median-of-three} 
strategy is one prominent example of this approach.

In theory, Quicksort can easily be generalized to split the input into 
$s\ge2$ partitions around $s-1$ pivots. (CQS corresponds to $s=2$). 
However, the implementations of \citeauthor{Sedgewick1975} and others
did not perform as well in running time experiments as 
classic single-pivot Quicksort \citep{Sedgewick1975}; 
it was common belief that the overhead of 
using several pivots is too large in practice. 
In 2009, however, Vladimir Yaroslavskiy proposed a new dual-pivot variant of 
Quicksort which surprisingly outperformed the 
highly engineered classic Quicksort of Java~6, 
which then lead to its replacement in Java~7 by 
Yaroslavskiy's dual-pivot Quicksort (YQS). 

An analytic explanation of the superiority of YQS was lacking at that time
(and is still open to some extent).
We showed that YQS indeed saves $5\,\%$ of comparisons 
(for random pivots); but also that it needs 
$80\,\%$ \emph{more} swaps than CQS 
\citep{Wild2012,Wild2013Quicksortarxiv}. 
Arguably, these traditional cost measures do not explain the running times well
and it is likely that features of 
modern CPUs like \emph{memory hierarchies} and \emph{pipelines} are the prime reason for the 
algorithm's efficiency. 
This would be in accordance with the aforementioned results 
in which multi-pivot Quicksort was less efficient, 
as the experiments were done in a time 
when computers simply did not have such features.

In this paper we address the effects of pipelines, 
which are used to speed up execution as follows: 
Inside the CPU, machine instructions are split into \emph{phases} like 
``fetching the instruction'', 
``decoding and loading data'', 
``executing the instruction'', and ``writing back results''.
Each phase takes one CPU cycle.
Modern CPUs execute different phases in parallel, \ie, 
if there are $L$ phases, one can execute $L$ instructions at once, 
each in a different phase, resulting in a speed-up of~$L$. 

The downside of this idea, however, comes with \emph{conditional jumps}. 
For those, the CPU will have to decide the 
outcome \emph{before} it has actually been computed\,---\,%
otherwise it could not go on with the first phases of the subsequent commands.
To cope with that, several \textsl{branch prediction schemes} have been invented,
which try to \emph{guess} the actual outcome.
In the simplest case  each branch (conditional jump) is marked 
``probably taken'' or ``probably not taken'' at compile time
and the CPU acts accordingly. 
As branch outcomes most often depend on data not known at compile time,
this strategy is quite limited.
In order to \emph{adapt} predictions to actually observed 
behavior special hardware support is needed. 
For the so-called \textsl{1-bit predictor}, the CPU stores for each 
branch in the code whether or not it was taken the last time it was executed 
and assumes the same behavior for the future.
In a \textsl{2-bit prediction} scheme, the CPU has to make a wrong prediction twice  
before switching. 
Such simple schemes have been used by the first CPUs with 
pipelining. 
Modern microprocessors implement more sophisticated heuristics~\cite{AgnerMicroarchitecture},
which try to recognize common patterns in branching behavior.
As they seem too intricate for precise analysis and 
are probably \emph{inferior} to the simple schemes in Quicksort%
\footnote{%
	There are no branch patterns in Quicksort
	(cf.\ \wref{sec:dirichlet-vectors-and-independence})!
}
we focus on the basic predictors from above.

In case of a false prediction 
(\textsl{branch misprediction} or \textsl{branch miss}, briefly BM)
the CPU has to \emph{undo} all erroneously executed steps (phases) 
and load the correct instructions instead. 
A branch miss is thus a costly event. 
As an extreme example, \citet{Kaligosi2006branch} observed on a 
Pentium~4 Prescott CPU 
(a processor with an extremely long pipeline and thus a high cost per BM) 
that the running time penalty of a BM is so high that 
a very skewed pivot choice outperformed the 
typically optimal median pivot, 
even though the latter leads to much less executed instructions in total.
The effect was not reproducible, however, on the slightly different 
Pentium~4 Willamette \cite{Biggar2008}.
Here two effects counteract: 
a biased pivot makes branches easier to predict but 
also gives unbalanced subproblem sizes. 
\Citet{brodal2005tradeoffs} have shown that this is a general trade-off
in comparison-based sorting:
One can only save comparisons at the price of additional branch misses.

Differences in the number of branch misses might be an 
explanation for the superiority of YQS and will 
thus be analyzed in this paper in connection with pivot sampling. 
We will also reconsider branch misses in classic Quicksort by
continuing the analyses of 
\citet{Kaligosi2006branch} and \citet{Biggar2008},
we present precise leading terms%
\footnote{%
	By our discussion in \wref{sec:dirichlet-vectors-and-independence}, 
	we also answer a question posed by
	\citet{Kaligosi2006branch} in their footnote~2, where they
	ask for an argument to make their heuristic analysis\,---\,assuming a 
	constant probability $\alpha$ for an element to be small\,---\,rigorous.
	We find it most appropriate to answer a footnote question also in a footnote,
	even though we consider the settlement of this open problem quite
	noteworthy.
}
and explicitly address pivot sampling for finite sample sizes~$k$.

We find that CQS and YQS cause roughly the same number of branch misses and 
hence pipelining effects are \emph{not}
a likely explanation for the superiority of YQS. 
Even if this result in isolation appears negative, it still entails valuable new insights:
it provides further evidence for the hypothesis of \citet{Kushagra2014} that 
it is the impact of memory hierarchies in modern computers that renders YQS faster.

\section{Generalized Quicksort}
\label{sec:generalized-quicksort}

In this section, we review classic Quicksort and 
Yaroslavskiy's dual-pivot variant and introduce the 
generalized pivot-sampling method.

\subsection{Generalized Pivot Sampling}
\label{sec:general-pivot-sampling}

\begin{algorithm}
	\small
	\vspace{-1ex}
	\def\pind{\id{i_p}}
	\begin{codebox}
		\Procname{$\proc{PartitionSedgewick}(A,\id{left},\id{right},p)$}
		\zi \Comment Assumes $\id{left} \le \id{right}$ and a sentinel $A[0]=-\infty$.
		\zi \Comment Rearranges \arrayA such that with return value $\pind$
				holds  \\[.5ex]
			\qquad \qquad
				$\begin{cases}
	 				\forall \: \id{left} \le j < \pind,		& \arrayA[j] \le p; \\
	 				\forall \: \pind \le j \le \id{right},		& \arrayA[j] \ge p .
				\end{cases}$
				\\[-1ex]
		\li $k\gets \id{left}-1$;
			\quad $g\gets \id{right}$
		\li \kw{do}\Do
		\li		\kw{do} $k\gets k+1$ \While $A[k] < p$ \kw{end while}
						\label{lin:classic-comp-1}
		\li		\kw{do} $g\gets g-1$ \While $A[g] > p$ \kw{end while}
						\label{lin:classic-comp-2}
		\li		\If $g>k$ \kw{then} Swap $A[k]$ and $A[g]$ \kw{end if} 
						\label{lin:classic-swap}
			\End
		\li	\While $g > k$ \label{lin:classic-outer-loop-branch}
		\li	\Return $k$
	\end{codebox}
	\vspace{-1ex}
	\plaincenter{\textbf{Invariant:}}\\[1ex]
	\plaincenter{
		\begin{tikzpicture}[
			yscale=0.5, xscale=0.5,
			baseline=(ref.south),
			every node/.style={font={\small}},
			semithick,
		]	
		
		\draw (-.75,0) -- ++(13,0) -- ++(0,1) -- ++(-13,0) -- cycle;
		\node at (-.5,-0.5) {$\id{left}$};
		\node at (12,-0.5) {$\id{right}$};

		\node at (2-.375,0.5) {$\le P$};
		\draw (4,1) -- ++ (0,-1);
		\node at (4.3,-0.4) {$k$};
		
		\node at (10.125,0.5) {$\ge P$};
		\draw (8,1) -- ++ (0,-1);
		\node at (7.7,-0.4) {$g$};

		\node[below] at (7.7,-0.5) {$\leftarrow$};
		\node[below] at (4.3,-0.5) {$\rightarrow$};
		
		\node[inner sep=0pt] (ref) at (6,0.5) {?};
		\end{tikzpicture}%
	}
	\caption{\strut%
		Classic Crossing-Pointer Partitioning.
	}
	\label{alg:partition-sedgewick}
\end{algorithm}

\begin{algorithm}
	\small
	\vspace{-1ex}
	\def\pind{\id{i_p}}
	\def\qind{\id{i_q}}
	\begin{codebox}
		\Procname{$\proc{PartitionYaroslavskiy}\,(\arrayA,\id{left},\id{right},p,q)$}
		\zi \Comment Assumes $\id{left} \le \id{right}$. 
		\zi \Comment Rearranges \arrayA s.\,t.\ with return value $(\pind,\qind)$
				holds \\[.5ex]
					\qquad\qquad
				$\begin{cases}
	 				\forall \: \id{left} \le j \le \pind,		& \arrayA[j] < p; \\
	 				\forall \: \like[l]{\id{left}}{\pind} < j < \qind,		& p \le \arrayA[j] \le q; \\
	 				\forall \: \like[l]{\id{left}}{\qind} \le j \le \id{right},		& \arrayA[j] \ge q .
				\end{cases}$
				\\[-1ex]
		\li $\ell\gets \id{left}$; 
		 	\quad $g\gets \id{right}$; 
		 	\quad $k\gets \ell$ \label{lin:yaroslavskiy-init-l-g-k} 
		\li	\While $k\le g$  \label{lin:yarosavskiy-outer-loop-branch}
		\li	\Do
				\If $\arrayA[k] < p$ \label{lin:yaroslavskiy-comp-1}
		\li		\Then
					Swap $\arrayA[k]$ and $\arrayA[\ell]$ \label{lin:yaroslavskiy-swap-1}
		\li			$\ell\gets \ell+1$ \label{lin:yaroslavskiy-l++-1}
		\li		\Else 
		\li			\If $\arrayA[k] \ge q$ \label{lin:yaroslavskiy-comp-2}
		\li			\Then
						\While $\arrayA[g] > q$ and $k<g$ \label{lin:yaroslavskiy-comp-3} 
		\li				\Do 
							$g\gets g-1$ 
						\EndWhile
		\li				\If $\arrayA[g] \ge p$ \label{lin:yaroslavskiy-comp-4}
		\li				\Then
							Swap $\arrayA[k]$ and $\arrayA[g]$ \label{lin:yaroslavskiy-swap-2}
		\li				\Else
		\li					Swap $\arrayA[k]$ and $\arrayA[g]$ \label{lin:yaroslavskiy-swap-3a}
		\li					Swap $\arrayA[k]$ and $\arrayA[\ell]$ \label{lin:yaroslavskiy-swap-3b}
		\li					$\ell\gets \ell+1$ \label{lin:yaroslavskiy-l++-2}
						\EndIf
		\li				$g\gets g-1$ \label{lin:yaroslavskiy-g--}
					\EndIf
				\EndIf
		\li		$k\gets k+1$ \label{lin:yaroslavskiy-k++}
			\EndWhile \label{lin:yaroslavskiy-end-while}
		\li	\Return $(\ell-1, g+1)$
	\end{codebox}
	\vspace{-1ex}
	\plaincenter{\textbf{Invariant:}}\\[1ex]
	\plaincenter{
		\begin{tikzpicture}[
			yscale=0.5, xscale=0.5,
			baseline=(ref.south),
			every node/.style={font={\small}},
			semithick,
		]	
		
		\draw (-.75,0) -- ++(13,0) -- ++(0,1) -- ++(-13,0) -- cycle;
		\node at (-.5,-0.5) {$\id{left}$};
		\node at (12,-0.5) {$\id{right}$};

		\node at (.75,0.5) {$< P$};
		\draw (2.25,1) -- ++ (0,-1);
		\node at (2.55,-0.4) {$\ell$};
		
		\node at (11.175,0.5) {$\ge Q$};
		\draw (10,1) -- ++ (0,-1);
		\node at (9.7,-0.4) {$g$};
		
		\node at (4.5,0.5) {$P\le \circ\le Q$};
		\draw (6.5,1) -- ++(0,-1);
		\node at (6.8,-0.4) {$k$};
	
		\node[below] at (9.7,-0.5) {$\leftarrow$};
		\node[below] at (2.55,-0.5) {$\rightarrow$};
		\node[below] at (6.8,-0.5) {$\rightarrow$};
		
		\node[inner sep=0pt] (ref) at (8.25,0.5) {?};
		\end{tikzpicture}%
	}
	\caption{\strut%
		Yaroslavskiy's dual-pivot partitioning.
	}
\label{alg:partition-yaroslavskiy}
\end{algorithm}

For the one-pivot case, our pivot selection process is 
declaratively specified as follows:
for $\vect t = (t_1,t_2) \in \N^2$ a fixed parameter,
choose a random sample $\vect V = (V_1,\ldots,V_k)$ of size 
$k = k(\vect t) \ce t_1+t_2+1$ from the input.
If we denote the \emph{sorted} sample by
$
	V_{(1)} \le V_{(2)} \le \cdots \le V_{(k)}
$,
we choose the pivot $P \ce V_{(t_1+1)}$ such 
that it divides the sorted sample into regions of respective sizes $t_1$ and $t_2$:
\begin{multline*}
	\underbrace{V_{(1)} \ldots V_{(t_{1})}}
			_{t_{1}\,\mathrm{elements}}
	\wrel\le
	\underbrace{V_{(t_{1}+1)}} _ {=P}
	\wrel\le
	\underbrace{V_{(t_{1}+2)} \ldots V_{(k)}}
			_{t_{2}\,\mathrm{elements}}
	\,.
\end{multline*}
The choice $\vect t = 0$ corresponds to no sampling at all.

For dual-pivot Quicksort, we have to choose two pivots instead of just one.
Therefore, with $\vect t \in \N^3$, we choose a sample 
of size $k = t_1+t_2+t_3+2$ and
take $P \ce V_{(t_1+1)}$ and $Q\ce V_{(t_1+t_2+2)}$ as pivots, which
divide the sorted sample into three regions of respective sizes $t_1$,
$t_2$ and~$t_3$.
Note that by definition, $P$ is the small(er) pivot and $Q$ is the
large(r) one.

\subsection{Classic Crossing-Pointer Partitioning}

The classic implementation of (single-pivot) Quicksort dates back to 
\citeauthor{Hoare1961b}'s original publication of the algorithm \cite{Hoare1961b}, 
later refined and popularized by \citet{Sedgewick1978}.
Detailed code for the partitioning procedure is given in \wref{alg:partition-sedgewick}.
It uses two ``crossing pointers'', $k$ and $g$, starting at the left resp.\ 
right end of the array and moving towards each other until they meet.
When a pointer reaches an element that does not belong to this pointer's 
partition, it stops. Once both have stopped, 
the out-of-order pair is exchanged and the pointers go on.
The array is thereby kept invariably in form shown below the code in 
\wref{alg:partition-sedgewick}.

\subsection{Yaroslavskiy's Dual-Pivoting Method}

Yaroslavskiy's partitioning method also consists of two
indices, $k$ and~$g$, that start at the left resp.\ right end and
scan the array until they meet. 
Additionally, however, a third index~$\ell$ ``lags'' behind~$k$ to further 
divide the region left of~$k$.
Detailed pseudocode is given in \wref{alg:partition-yaroslavskiy}, where
we also give the invariant maintained by the algorithm (below the code).

\subsection{From Partitioning to Sorting}

When partitioning is finished, $k$ and $g$ have met and thus 
for CQS, divide the array into two ranges, 
containing the elements smaller resp.\ larger than $P$; 
for YQS, $\ell$ and $g$
induce \emph{three} ranges, containing the elements that are
smaller than $P$, between $P$ and $Q$ 
resp.\ larger than $Q$;
(see also the invariants, when the ``?''-area has vanished).
Those regions are then sorted recursively, independently of each other.

We omit detailed pseudocode for the full sorting procedures since 
the only contributions to the leading term of costs come form partitioning.%
\footnote{%
	Note, however, that some care is needed to preserve randomness in
	recursive calls, which is in turn crucial to set up recurrence equations.
	See Appendix~B of \citep{NebelWild2014} for further details on how to
	achieve this for Generalized Yaroslavskiy Quicksort.
}
(Recall that we consider $k=\Oh(1)$ constant.)

To implement generalized pivot sampling, $k$ elements from the array are chosen
and the needed order statistic(s) are selected from this sample.
Note that after sampling, we know for sure to which partition the sample elements belong, 
so we can exclude them from partitioning:
the partitioning methods as they are given here are only applied to the ``ordinary'' 
elements, \ie, the $n-k$ elements that have not been part of the sample.

For subarrays with at most \isthreshold elements, we switch to Insertionsort 
(cf.\ \citep{Sedgewick1978} resp.\ \citep{NebelWild2014}),
where \isthreshold is constant and at least $k$.
The resulting algorithms, Generalized Classic resp.\ Yaroslavskiy Quicksort with pivot
sampling parameter $\vect t$ and Insertionsort threshold
\isthreshold, are henceforth called $\generalClassictM$ and $\generalYarostM$.

\section{Notation and Preliminaries}
\label{sec:notation}

To unify formal notation, we denote by $s\ge 2$ 
the number of partitions produced in one step, 
\ie, $s=2$ for CQS, $s=3$ for YQS, and no other
values for $s$ are considered in this paper.
Recall that $s-1$ pivots are needed for partitioning into $s$ parts.

We write vectors in bold font, for example 
$\vect t=(t_1,\ldots,t_{s})$.
For concise notation, we use expressions like $\vect t + 1$ to mean
\emph{element-wise} application, \ie, $\vect t + 1 = (t_1+1,\ldots,t_{s}+1)$.
By $\dirichlet(\vect\alpha)$, we denote a random variable with \textsl{Dirichlet
distribution} and shape parameter
$\vect\alpha = (\alpha_1,\ldots,\alpha_d) \in \R_{>0}^d$.
Likewise, $\uniform(a,b)$ is a random variable uniformly distributed in the
interval $(a,b)$.
We use ``$\eqdist$'' to denote equality in distribution.

As usual for the average case analysis of sorting algorithms, we assume the
\textsl{random permutation model}, \ie, all elements are
different and every ordering of them is equally likely.
The input is given as array $\arrayA$ of length $n$ and we denote the initial
entries of $\arrayA$ by $U_1,\ldots,U_n$.
We further assume that $U_1,\ldots,U_n$ are
i.\,i.\,d.\ uniformly $\uniform(0,1)$ distributed;
as their ordering forms a random permutation \citep{mahmoud2000sorting}, this
assumption is without loss of generality.

For YQS, we call  an element \emph{small}, \emph{medium}, or \emph{large} if
it is smaller than~$P$, between $P$ and $Q$, or larger than~$Q$, respectively;
for CQS, elements are either smaller than $P$ or larger.

\section{Generalized Quicksort Recurrence}
\label{sec:analysis-recurrence}

For $\vect t \in \N^s$ and $\harm{n}$ the $n$th harmonic number, we define the
\textsl{discrete entropy} $\discreteEntropy[]$ as
\begin{align}
\label{eq:discrete-entropy}
		\discreteEntropy[] \wrel{=} \discreteEntropy[\vect t]
	&\wwrel=
		\sum_{l=1}^s \frac{t_l+1}{k+1} 
				(\harm{k+1} - \harm{t_l+1})
	\;. 
\end{align} 
In the limit $k\to\infty$, such that the ratios ${t_l}/k$ 
converge to constants~$\tau_l$, \discreteEntropy coincides with the
\weakemph{entropy function} $\contentropy[]$ of information theory: 
\begin{align}
\label{eq:limit-entropy}
		\discreteEntropy
	&\wwrel\sim
		- \sum_{l=1}^s \tau_l \bigl( \ln(t_l+1) - \ln(k+1) \bigr)
\\*	&\wwrel\sim	  - \sum_{l=1}^s \tau_l \ln(\tau_l)
	\wwrel{\equalscolon} \contentropy
	\;.
\end{align}
The first step follows from the asymptotic equivalence
$\harm{n} \sim \ln(n)$ as $n\to\infty$.

\begin{theorem}[Quicksort Recurrence]
\label{thm:leading-term-expectation-hennequin}
~\\
	The total expected costs for sorting a random permutation with Quicksort using 
	a partitioning method that incurs expected costs 
	$\E[T_n]$ of the form $\E[T_n] = an+\Oh(1)$
	to produce $s$ partitions and
	whose $s-1$ pivots are chosen by generalized pivot sampling with parameter 
	$\vect t\in\N^s$ are asymptotically
	\smash{$
			\sim \frac{a}{\discreteEntropy[]} \, n \ln n
	$},
	where $\discreteEntropy[]$ is given by \wref{eq:discrete-entropy}.
\end{theorem}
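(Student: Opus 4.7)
The plan is a standard divide-and-conquer analysis: derive a recurrence for the expected total cost, identify the distribution of subproblem sizes under generalized pivot sampling, and then solve the recurrence by matching the leading asymptotic order. Let $C_n$ denote the expected total sorting cost on a random permutation of size $n$. Conditioning on the sizes $(J_1,\ldots,J_s)$ of the $s$ subproblems produced in one partitioning step, and using that the recursive sub-inputs are themselves uniformly random permutations (cf.\ \S\ref{sec:generalized-quicksort}), the assumption $\E[T_n]=an+\Oh(1)$ yields
\begin{equation*}
  C_n \wwrel= an + \Oh(1) + \sum_{l=1}^{s}\E[C_{J_l}] .
\end{equation*}

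Next, I would describe the distribution of the subproblem sizes. Under the random permutation model and sampling parameter $\vect t\in\N^s$, the relative widths $(\beta_1,\ldots,\beta_s)$ of the $s$ pivot gaps in $[0,1]$ follow a $\dirichlet(\vect t+1)$ law; marginally $\beta_l\sim\mathrm{Beta}(t_l+1,k-t_l)$, and conditional on $\beta_l$ we have $J_l=\binomial(n-k,\beta_l)+t_l$ (the non-sample elements falling into gap $l$, plus the $t_l$ sample elements that belong there).

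With the ansatz $C_n=\alpha n\ln n+\Oh(n)$, the recurrence reduces, at leading order, to evaluating $\sum_{l=1}^{s}\E[J_l\ln J_l]$. Using the concentration of $J_l$ around $n\beta_l$, together with the identity $\sum_l(t_l+1)=k+1$ to collapse the coefficient of $n\ln n$ to $1$, one obtains
\begin{equation*}
  \sum_{l=1}^{s}\E[J_l\ln J_l]
  \wwrel= n\ln n \,+\, n\sum_{l=1}^{s}\E[\beta_l\ln\beta_l] \,+\, \Oh(\sqrt n\,\ln n),
\end{equation*}
so that matching the $n\ln n$ coefficients on both sides of the recurrence forces $\alpha = a/\bigl(-\sum_l\E[\beta_l\ln\beta_l]\bigr)$.

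It remains to evaluate $\E[\beta_l\ln\beta_l]$: differentiating the Beta function with respect to its first argument gives
\begin{equation*}
  \E[\beta_l\ln\beta_l] \wwrel= \tfrac{t_l+1}{k+1}\bigl(\psi(t_l+2)-\psi(k+2)\bigr)
  \wwrel= -\tfrac{t_l+1}{k+1}\bigl(\harm{k+1}-\harm{t_l+1}\bigr),
\end{equation*}
using $\psi(n+1)=\harm n-\gamma$, so that summing over $l$ reproduces $-\discreteEntropy[\vect t]$ by definition~\wref{eq:discrete-entropy} and hence $\alpha=a/\discreteEntropy[\vect t]$ as claimed. The main technical obstacle I anticipate is not any of the identities above but the bookkeeping of lower-order errors\,---\,the $\Oh(1)$ partitioning overhead, the concentration slack between $J_l$ and $n\beta_l$, and the Insertionsort base case\,---\,all of which must be absorbed into the $\Oh(n)$ remainder; this is most cleanly handled via Roura's continuous master theorem applied to the shape function $w(z)=\sum_l f_l(z)$, a convex combination of Beta densities whose regularity conditions are easily verified whenever $t_l\ge 0$.
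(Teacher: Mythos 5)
Your sketch is correct and follows the same route the paper itself endorses: the paper does not reprove \wref{thm:leading-term-expectation-hennequin} but cites Hennequin's differential-equation argument and, as the ``more concise and elementary'' alternative, the proof via Roura's Continuous Master Theorem in the appendix of \citep{NebelWild2014} --- exactly the CMT-based argument you outline (Dirichlet spacings, ansatz-and-match for the $n\ln n$ coefficient, $\E[\beta_l\ln\beta_l]$ via digamma, error terms absorbed by CMT regularity conditions). One tiny wording nit: the coefficient of $n\ln n$ collapses to $1$ because $\sum_l\beta_l=1$ almost surely, not directly because $\sum_l(t_l+1)=k+1$; the latter identity enters only in computing $\E[\beta_l]$ and the digamma step.
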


\wref{thm:leading-term-expectation-hennequin} has first been proven by
\citet[Proposition~III.9]{hennequin1991analyse} using arguments on the
Cauchy-Euler differential equations for the corresponding generating function of costs.
A more concise and elementary proof using 
\citeauthor{Roura2001}'s \textsl{Continuous Master Theorem}~\citep{Roura2001} 
is given in the appendix of \citep{NebelWild2014}.

\needspace{5\baselineskip}
\section{Branch Mispredictions}
\label{sec:branch-mispredictions}

Thanks to \wref{thm:leading-term-expectation-hennequin} we can easily make the transition
from expected \emph{partitioning} costs to the corresponding \emph{overall} costs, so
in the following, we can focus on the first partitioning step only.

For our Quicksort variants, there are two different types of branches: 
those which correspond to a key comparison and all others (loop headers etc.).
It turns out that all non-comparison branches are highly predictable, 
meaning that any reasonable prediction scheme will only incur a constant number of
mispredictions (per partitioning step):

In CQS we have one backward branch at the end of the outer loop with condition ``$g > k$''
(\wref{lin:classic-outer-loop-branch} of \wref{alg:partition-sedgewick})
and an if-statement with the same condition
(\wref{lin:classic-swap}), both are violated exactly once, then we exit.
Similar in YQS, there is an outer loop branch with ``$k\le g$'' 
(\wref{lin:yarosavskiy-outer-loop-branch} in \wref{alg:partition-yaroslavskiy}) 
and an inner one with ``$k<g$'' (\wref{lin:yaroslavskiy-comp-3}),
which again fail at most once.
For the leading term of the number of branch misses (BM), we can therefore focus on the 
comparison-based branches
(two in CQS and four in YQS).

All prediction schemes analyzed in this paper are \emph{local} in the sense that 
for any branch instruction in the code, the prediction of its next outcome only depends on 
formerly observed behavior of this very branch instruction.
As a consequence, predictions of one branch instruction in the code are \emph{independent} of 
the history of any \emph{other} branch instruction.
Note however that the histories themselves might be highly inter-dependent in general,
which complicates the analysis of branch misses.

\subsection{Dirichlet Vectors and Conditional Independence}
\label{sec:dirichlet-vectors-and-independence}

Recall that our input consists of i.\,i.\,d.\ $\uniform(0,1)$
variables.
If, for CQS, we \emph{condition} on the pivot value, \ie, consider $P$ 
fixed, an ordinary element $U$ is small, if $U \in (0,P)$,
and large if $U \in (P,1)$.
If we call the (random) lengths of these two intervals $\vect D = (D_1,D_2) = (P,1-P)$,
then $D_1$ and $D_2$ are the \emph{probabilities} for an element to be small resp.\ large.
For YQS, we condition on both $P$ and $Q$ and correspondingly get $\vect D = (D_1,D_2,D_3) = 
(P,Q-P,1-Q)$ as the probabilities for small, medium resp.\ large elements:

\medskip
\noindent\plaincenter{%
	\def\r{1.5pt}
	\begin{tikzpicture}[
		every node/.style={font=\footnotesize},
	]
		\useasboundingbox (0,-0.4) rectangle (5,0.5) ;
		
		\draw[|-|] (0,0) node[below=.5ex] {$0$}  -- (5,0) node[below=.5ex] {$1$};
		\filldraw (1,0) circle (\r) node[below=.5ex] {$P$};
		\filldraw (3.5,0) circle (\r) node[below=.5ex] {$Q$};
		\begin{scope}[
				yshift=1.5ex,<->,
				shorten >=.3pt,shorten <=.3pt,
				fill=white,inner sep=1pt,
		]	
			\draw (0,0)   -- node[fill] {$D_1$} (1,0) ;
			\draw (1,0)   -- node[fill] {$D_2$} (3.5,0) ;
			\draw (3.5,0) -- node[fill] {$D_3$} (5,0) ;
		\end{scope}
	\end{tikzpicture}%
}
\medskip

The random variable $\vect D \in [0,1]^s$ is a vector of \textsl{spacings} 
induced by order statistics from a sample of $\uniform(0,1)$ variables in the unit interval,
which is known to have a \weakemph{Dirichlet}
$\dirichlet(\vect t + 1)$ distribution 
(\wref{pro:spacings-dirichlet-general-dimension} in 
\wref{app:distributions}).

The vital observation for our analysis is that the probabilities $\vect D$ 
of the class of an element $U$ are \emph{independent} of all other (ordinary) elements!
For the comparisons during partitioning, this implies that their \emph{outcomes}
are i.\,i.\,d.; precisely speaking: 
\textsl{the sequence of (binary) random variables that correspond to the 
outcomes of all executions (in one partitioning step) of the 
key comparison at \textit{one comparison location}
are i.\,i.\,d.,}  and their distribution only depends
on the pivot \emph{values} (via $\vect D$).
The outcome of a comparison does neither depend on the position in the array, 
nor on the number of, say, small elements that we have already seen.

As the outcomes for different comparison locations are always independent, so are the corresponding
branch histories and predictions.
Conditioning on $\vect D$, we can therefore count the branch misses separately for
all comparison locations.

\subsection{Probabilities of Branches in CQS}

CQS has two comparison-based branch locations: 
$\ui C{c1}$ and $\ui C{c2}$ in the two inner loops
(\wref[lines]{lin:classic-comp-1} and~\ref*{lin:classic-comp-2} in \wref{alg:partition-sedgewick}).
The first one jumps back to the loop header if $\arrayA[k] < P$, the second one if
$\arrayA[g] > P$.
Conditional on $\vect D = (D_1,D_2)$, the two branches are executed 
$\ui {C_n}{c1} = D_1\.n + \Oh(1)$ resp.\ $\ui {C_n}{c2} = D_2\.n + \Oh(1)$ 
times in expectation (in the first partitioning step) and 
each time, they are taken i.\,i.\,d.\ with probability
$\btprob{c1} = D_1$ resp.\ $\btprob{c2} = D_2$, 
(the probabilities for the element to be small resp.\ large).

Note that CQS is symmetric:
At both $\ui C{c1}$ and $\ui C{c2}$, we compare an element with $P$, 
so the two branches behave the same w.\,r.\,t.\ branch misses.
It is then convenient to work with the \emph{combined} (virtual) 
comparison location~$\ui Cc$ which is
executed $\ui{C_n}c = \ui{C_n}{c1} + \ui{C_n}{c2} = n+\Oh(1)$ times
with branch probability $\btprob{c} = D_1$.
(Such tricks will fail for the asymmetric YQS.)
Denoting by $\bmprob{c}{}$ the probability for a BM at $\ui Cc$,
the expected number of BMs in one partitioning step of CQS is then simply
\begin{align}
\label{eq:toll-CQS-generic}
		\E[\toll[n]{\branchmisses}] 
	\wwrel= 
		\E[\bmprob{c}{}]\.n+\Oh(1)\;.
\end{align}

\begin{table}
	\small
	\def\C#1{\ui {C_n}{#1}}
	\def\rowsep{2.5pt}
	\def\ed{}
	\def\pd{\vphantom{\ed}}
	\plaincenter{%
	\begin{tabular}{ >{\(}c<{\)} >{\(}l<{\)} >{\(}l<{\)} }
		\toprule
		\multicolumn1{c}{Location}	&	\multicolumn2{c}{Expectation conditional on $\vect D$} \\
		\midrule\\[-2ex]
		\C{c1}	& D_1 & {}\cdot (n + \Oh(1)) \\[\rowsep]
		\C{c2}	& D_2 & {}\cdot (n + \Oh(1)) \\[\rowsep]
		\midrule\\[-2ex]
		\C{y1}	& (D_1 + D_2) & {}\cdot (n + \Oh(1)) \\[\rowsep]
		\C{y2}	& (D_1 + D_2)(D_2 + D_3)\mkern-20mu & {}\cdot (n + \Oh(1)) \\[\rowsep] 
		\C{y3}	& D_3	& {}\cdot (n + \Oh(1)) \\[\rowsep] 
		\C{y4}	& D_3(D_1+D_2) & {}\cdot (n + \Oh(1)) \\[\rowsep]
		\bottomrule
	\end{tabular}%
	}
	\caption{
		The expected execution frequencies of the comparisons locations in CQS and YQS,
		conditional on $\vect D$.
		The full distributions of the frequencies are given in \citep{NebelWild2014},
		the conditional expectations are then easily computed using the lemmas in
		Appendix~C of~\citep{NebelWild2014}.
	}
	\label{tab:cmps-execution-frequencies}
\end{table}

\subsection{Probabilities of Branches in YQS}

For YQS, we have four comparison locations:
$\ui C{y1}$ (\wref{lin:yaroslavskiy-comp-1} in \wref{alg:partition-yaroslavskiy}),
$\ui C{y2}$ (\wref{lin:yaroslavskiy-comp-2}),
$\ui C{y3}$ (\wref{lin:yaroslavskiy-comp-3}) and
$\ui C{y4}$ (\wref{lin:yaroslavskiy-comp-4}).
\wref{tab:cmps-execution-frequencies} lists their execution frequencies.
Note that $\ui C{y2}$ is only reached for elements where $\ui C{y1}$ determined that 
they are not small, so at $\ui C{y2}$, we already know the element is either medium or large.
Similarly, $\ui C{y4}$ only handles elements that $\ui C{y3}$ proved to be non-large.
Recalling that the probability of an (ordinary) element to be small, medium or
large is $D_1$, $D_2$ and $D_3$, respectively, a look at the comparisons yields
\begin{align}
\label{eq:branch-taken-prob-YQS}
		\btprob{y1}
	&\wrel=
		D_2+D_3\,,
&
		\btprob{y2}
	&\wrel=
		\frac{D_2}{D_2+D_3}\,,
\\
		\btprob{y3}
	&\wrel=
		D_1+D_2\,,
&
		\btprob{y4}
	&\wrel=
		\frac{D_1}{D_1+D_2}\;.
\end{align} 
There are no symmetries to exploit in YQS (all comparisons are of different type), 
so we will get different BM probabilities $\bmprob{y1}{},\ldots,\bmprob{y4}{}$ 
for all locations.
The expected number of BM in one partitioning step is
\vspace{-2ex}
\begin{align}
\label{eq:toll-YQS-generic}
		\E[\toll[n]{\branchmisses}]
	\wwrel=
		\sum_{l=1}^4 \E\Bigl[ \ui{C_n}{yl} \cdot \bmprob{yl}{} \Bigr]
	\;.
\end{align}

\subsection{Branch Prediction Schemes}
\label{sec:branch-prediction-schemes}

For each comparison location $\ui Ci$ in the code,
we determined the probability $\btprob{i} = \btprob{i}(\vect d)$ that, 
conditional on $\{\vect D=\vect d\}$, this branch will be taken.
The optimal strategy would then be to predict this branch to be taken iff
$\btprob{i}(\vect D) \ge \frac12$\,---\,which, of course, is not possible since 
we do not know $\vect D$ at runtime.
(This theoretical scheme has been considered by \citet{Kaligosi2006branch} as 
benchmark.)

Actual adaptive prediction schemes try to \emph{estimate} $\btprob{i}$ 
using a limited history of 
previous outcomes and base predictions on their current estimate.
As the CPU has to keep track of this history for many branches, typical memory sizes are 
as low as 1 or 2 bits.
However, in practice it is impossible to reserve even just a single bit of 
branch history for every possible branch-instruction location.
Therefore actual hardware prediction units use \emph{hash tables} of history storage,
which means that all branch instructions whose addresses hash to the same value will 
share one history storage. 
The resulting \textsl{aliasing effects} have typically small, but rather chaotic influences on 
predictions in practice. 
We ignore those to keep analysis tractable.

The behavior of an (idealized) local adaptive prediction scheme
(for a single branch instruction) forms a \textsl{Markov chain} over the states of  
its finite memory of past behavior.
Each state corresponds to a prediction (``taken'' / ``not taken'') and has two successor
states depending on the actual outcome.

As the involved Markov chains have only 2 or 4 states, 
they approach their stationary distributions very quickly.%
\footnote{%
	Numeric experiments show that after 100 iterations the dependence of the state 
	distribution on the initial state is less than $10^{-6}$.
}
For the asymptotic number of branch misses for partitioning a 
large array, we can therefore assume the predictor automaton to have reached its steady state.
Averaging over the states and the outcomes of the next branch, we get the
(expected) BM probability: 
$\bmprob{i}{} = \steadystatemissrate{}(\btprob{i})$ for a (deterministic) function 
$f:[0,1]\to[0,1]$ depending on the prediction scheme that we call its 
\textsl{steady-state miss-rate function}.
(\citet{Biggar2008} call $p \mapsto 1-\steadystatemissrate{}(p)$ 
the \textsl{steady-state predictability}.)
The schemes considered herein only differ in the topology of the Markov chain and
thus in $\steadystatemissrate{}(p)$.

\begin{figure}
	\plaincenter{\includegraphics[width=.66\linewidth]{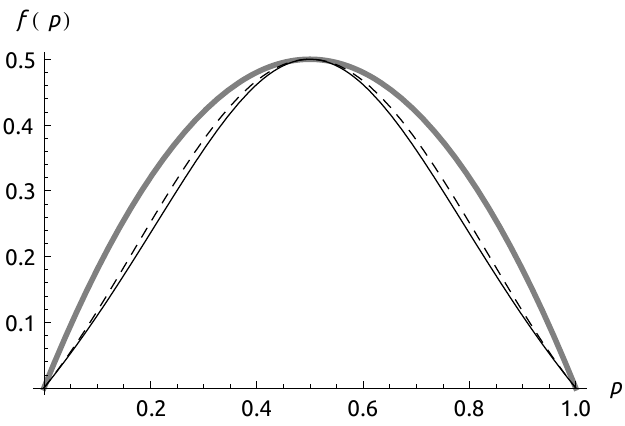}}
	\caption{%
		Comparison of the steady-state miss-rate functions for the 1-bit (thick gray),
		the 2-bit saturating-counter (black) and the 2-bit flip-on-consecutive (dashed) predictors.
		The $x$-axis varies $p$ from $0$ to $1$.
		All functions are symmetric around $1/2$\,---\,%
		where they have a peak with value $1/2$\,---\,%
		and then drop to $0$ as $p$ approaches $0$ or $1$.
		Note that the 2-bit saturating counter leads to the best predictions for any $p$ 
		(as long as all the branch executions are i.\,i.\,d.\ and taken with probability $p$).
	}
	\label{fig:comparison-steady-state-branch-miss}
\end{figure}

\subsubsection*{1-Bit Predictor.}

The 1-bit predictor is arguably the simplest possible adaptive prediction scheme:
it always predicts a branch to behave the same way as the
last time it was executed.
We thus encounter a branch miss for a comparison location $\ui Ci$
if and only if two subsequent executions happen on elements with
different comparison results.
As the branch probabilities for two executions of the same comparison
location in one partitioning step are the same and branching is independent,
we get 
\begin{align}
\label{eq:steady-state-predictability-1bit}
	\steadystatemissrateonebit(p) &\wrel= 2p(1-p) \;.
\end{align}

\subsubsection*{2-Bit Saturating-Counter Predictor.}

\begin{figure}
	\def\taken{\tiny taken}
	\def\nottaken{\tiny not t.}
	\plaincenter{%
	\begin{tikzpicture}[
		st/.style={
			circle split,draw,
			inner sep=1pt,font={\footnotesize},
			minimum size=2em,
			fill=black!5,
		},
		scale=1.5,
		->,>=stealth,
	]
		\draw[overlay,-,dotted,ultra thick,black!20] (-0.5,.6) -- (-0.5,-.6);
		\node[st] (ST) at (-2,0) {$1$ \nodepart{lower}\taken};
		\node[st] (T) at (-1,0) {$2$ \nodepart{lower}\taken};
		\node[st] (NT) at (0,0) {$3$ \nodepart{lower}\nottaken};
		\node[st] (SNT) at (1,0) {$4$ \nodepart{lower}\nottaken};
		\draw (ST) to[out=220,in=270] node[below]{\taken} ++(-.7,0) to[out=90,in=140] (ST) ;
		\draw (ST) to[out=40,in=140] node[above]{\nottaken} (T)  ;
		\draw (T) to[out=40,in=140] node[above]{\nottaken} (NT)  ;
		\draw (NT) to[out=40,in=140] node[above]{\nottaken} (SNT)  ;
		\draw (SNT) to[out=40,in=90] node[above]{\nottaken} ++(.7,0) to[out=270,in=-40] (SNT) ;
		\draw (SNT) to[out=220,in=-40] node[below]{\taken} (NT)  ;
		\draw (NT) to[out=220,in=-40] node[below]{\taken} (T)  ;
		\draw (T) to[out=220,in=-40] node[below]{\taken} (ST)  ;
	\end{tikzpicture}
	}
	\caption{%
		2-bit saturating-counter predictor.
	}
	\label{fig:2-bit-saturating-counter}
\end{figure}

\begin{figure}
	\def\taken{\tiny taken}
	\def\nottaken{\tiny not t.}
	\plaincenter{%
	\begin{tikzpicture}[
			st/.style={
				circle split,draw,
				inner sep=1pt,font={\footnotesize},
				minimum size=2em,
				fill=black!5,
			},
			scale=1.4,
			->,>=stealth,
	]
		\draw[overlay,-,dotted,ultra thick,black!20] (0.9,1.5) -- (0.9,-.5);
		\node[st] (ST) at (0,1) {$1$\nodepart{lower}\taken};
		\node[st] (T) at (0,0) {$2$\nodepart{lower}\taken};
		\node[st] (NT) at (1.8,1) {$3$\nodepart{lower}\nottaken};
		\node[st] (SNT) at (1.8,0) {$4$\nodepart{lower}\nottaken};
		\draw (ST) to[out=220,in=270] ++(-.7,0) to[out=90,in=140]node[above]{\taken} (ST) ;
		\draw (ST) to[bend right] node[left]{\nottaken} (T)  ;
		\draw (T) to[bend right] node[right]{\taken} (ST)  ;
		\draw (T) to node[above]{\nottaken} (SNT)  ;
		\draw (SNT) to[out=40,in=90] ++(.7,0) to[out=270,in=-40] node[below]{\nottaken} (SNT) ;
		\draw (SNT) to[bend right] node[right]{\taken} (NT)  ;
		\draw (NT) to[bend right] node[left]{\nottaken} (SNT)  ;
		\draw (NT) to node[below]{\taken} (ST)  ;
	\end{tikzpicture}%
	}
	\caption{%
		2-bit flip-on-consecutive predictor.
	}
	\label{fig:2-bit-flip-consecutive}
\end{figure}

The very limited memory of the 1-bit predictor fails on the frequent pattern of
a loop branch that is taken most of the time and only once in a while used to
leave the loop. 
In this situation, it would be much better to treat the loop exit as an outlier
that does not change the prediction.
2-bit predictors can achieve that.
They use 2 bits per branch instruction
to encode one of four states of a finite automaton.
The prediction is then made based on this current state and 
after the branch has executed, the state is updated according to the transition
matrix of the automaton.

There are two variants of 2-bit predictors described in the literature, that
use slightly different automata and thus give slightly different results.
The first one is the 2-bit saturating counter (2-bit~sc) shown in
\wref{fig:2-bit-saturating-counter}, which is used by
\citet{brodal2005tradeoffs} and \citet{Biggar2008}.
The second one is 2-bit flip-on-consecutive predictor (2-bit~fc) described 
by \citet{Kaligosi2006branch}; see below.
As both predictors are sensible choices, we analyze both.
Moreover, it is interesting to see the difference between the two;
see \wref{fig:comparison-steady-state-branch-miss} for that, as well.

To derive the steady-state miss-rate function, we translate the
automaton shown in \wref{fig:2-bit-saturating-counter} to a Markov chain,
compute its steady-state distribution and from that the expected misprediction rate.
Details are given in \wref{app:steady-state-predictability}.
The resulting function is 
\begin{align}
\label{eq:steady-state-predictability-2bitsc}
		\steadystatemissratetwobitsc(p)
	&\wwrel=
		\frac{p(1-p)}{1-2p(1-p)}  \;.
\end{align}

\subsubsection*{2-bit Flip-On-Consecutive Predictor.}

The second 2-bit variant flips its prediction only after two consecutive
mispredictions and we thus call it ``2-bit flip-on-consecutive predictor'' 
(see \wref{fig:2-bit-flip-consecutive}).
It is analyzed in the very same manner as 2-bit sc,
details are again given in \wref{app:steady-state-predictability}, where we find
\begin{align}
\label{eq:steady-state-predictability-2bitfc}
		\steadystatemissratetwobitfc(p)
	&\wwrel=
		\frac{2p^2(1-p)^2 + p(1-p)}{1-p(1-p)} \;.
\end{align}

\subsection{Results}

\begin{figure*}
	\def\adiv{\left\lfloor\!\frac{a-b}3\!\right\rfloor}
	\def\adivf{\left\lfloor\!\frac{a-b}4\!\right\rfloor}
	\resizebox{\linewidth}!{
	\parbox{\linewidth}{
	\begin{align*}
			\int_0^1 \!\!\! \frac{x^a(1-x)^b}{1-x(1-x)} \,dx
		&\wwrel=
			-\sum_{i=0}^{b-1} \BetaFun(a-i,b-i)
					\bin+ \sum_{i=1}^{\adiv}
					(-1)^{i-1} \bigl(
						\tfrac1{(a-b)-3i+2} + \tfrac1{(a-b)-3i+1}
					\bigr)
		\bin{+} \rho_1(a-b).
		\qquad\qquad (a\ge b)
	\\
				\int_0^1 \!\!\! \frac{x^a(1-x)^b}{\frac12-x(1-x)} \,dx
		&\wwrel=
			-\sum_{i=0}^{b-1} 2^{-i} \BetaFun(a-i,b-i)
				\bin+ 2^{-b} \sum_{i=1}^{\adivf}
					\bigl(-\tfrac14\bigr)^{i-1} \bigl(
						\tfrac1{(a-b)-4i+3} 
						+ \tfrac1{(a-b)-4i+2} + \tfrac{1/2}{(a-b)-4i+1}
					\bigr)
		\bin{+} 2^{-b}\rho_2(a-b).
	\end{align*}
	}
	}
	\footnotesize
	\def\adiv{\left\lfloor\!\frac{d}3\!\right\rfloor}
	\def\adivf{\left\lfloor\!\frac{d}4\!\right\rfloor}
	\vspace{-2ex}
	\begin{align*}
			\rho_1(d)
		&\wwrel{=}
				(-1)^{\adiv} \begin{cases}
					\frac{2\pi}{3\sqrt3} 	& \text{if } d \equiv 0 \pmod 3\\
					\frac{\pi}{3\sqrt3} 	& \text{if } d \equiv 1 \pmod 3\\				
					1-\frac{\pi}{3\sqrt3} 	& \text{if } d \equiv 2 \pmod 3\\
				\end{cases},
	&
			\rho_2(d)
		&\wwrel{=}
				\bigl(-\tfrac14\bigr)^{\adivf}\begin{cases}
					\pi 					& \text{if } d \equiv 0 \pmod 4\\
					\pi/2 					& \text{if } d \equiv 1 \pmod 4\\				
					1 						& \text{if } d \equiv 2 \pmod 4\\
					\frac32-\frac{\pi}{4} 	& \text{if } d \equiv 3 \pmod 4\\
				\end{cases}.
	\end{align*}
	\caption{%
		Explicit expressions for the integrals involved in 
		$\geoDirichletExp[1]ab$ and
		$\geoDirichletExp[2]ab$.
		The formulas are only valid for $a\ge b$, but since the
		integrals are symmetric, one can simply use $a' = \max\{a,b\}$ and $b'=\min\{a,b\}$.
		The proof consists in finding recurrences for the polynomial long division
		of the integrand, solving these recurrences and integrating them summand by summand.
		Details are given in \wref{app:distributions}.
	}
	\label{fig:geometric-beta-integrals}
\end{figure*}

Finally, we are in the position to put everything together.
As the involved constants become rather large,
we need to introduce some shorthand notation:
We write $\vect\tp = \vect t + 1 = (t_1+1,\ldots,t_s+1)$
and $\kp = k+1 = \tp_1 +\cdots+\tp_s$.
Moreover, $\rf xn$ denotes the $n$th rising factorial power of $x$ and by
$\geoDirichletExp[c]{a}{b}$ we denote the integral
\begin{align}
\label{eq:def-geo-dirichlet-exp}
		\geoDirichletExp[c]ab
	&\wrel=
		\frac1{\BetaFun(a,b)}
		\int_0^1 \frac{x^a(1-x)^b}{\frac1c - x(1-x)} \,dx \;.
\end{align}
where $\BetaFun(a,b)$ is the Beta function (see \wref{app:distributions}).
We will only need the cases $c=1$ and $c=2$, for which we have explicit,
but unwieldy expressions (see \wtpref{fig:geometric-beta-integrals}).

\begin{theorem}[Main Result]
\label{thm:main-results-BM} ~\\
	Let 
	$\E[\branchmisses_{\!n}^{\mathrm{CQS}}]$ and
	$\E[\branchmisses_{\!n}^{\mathrm{YQS}}]$ be 
	the expected number of branch misses incurred when 
	sorting a random permutation of length $n$ with 
	classic resp.\ Yaroslavskiy's Quicksort 
	under generalized pivot sampling with parameter 
	$\vect t = \vect\tp-1\in\N^s$.
	Then
	\begin{align*}
			\E[\branchmisses_{\!n}^{\mathrm{CQS}}]
		&\wwrel{\sim}
			\frac{a_{\mathrm{CQS}}}{\discreteEntropy[]}\, n\ln n \text{ and }
\\
			\E[\branchmisses_{\!n}^{\mathrm{YQS}}]
		&\wwrel{\sim}
			\frac{a_{\mathrm{YQS}}}{\discreteEntropy[]}\, n\ln n
			\,,
	\end{align*}
	with $\discreteEntropy[]=\discreteEntropy$ given in \wref{eq:discrete-entropy}
	and the constants
	\vspace{-2ex}
	\begin{align*}
			a_{\mathrm{CQS}}
		&\wrel{=}
			g_{\tp_1,\tp_2} \text{ and }
	\\
			a_{\mathrm{YQS}}
		&\wrel{=}
				\bigl(
					\tfrac{\tp_1}{\kp }		g_{\tp_1+1,\tp_2+\tp_3} + 
					\tfrac{\tp_2}{\kp }		g_{\tp_1,\tp_2+\tp_3+1}
				\bigr)
	\\* &\wwrel{\ppe}{}
				\bin+
				\bigl(
					\tfrac{\tp_1 \tp_2}{\rf\kp2}		g_{\tp_2+1,\tp_3} 
					+\tfrac{\tp_1 \tp_3}{\rf\kp2}		g_{\tp_2,\tp_3+1}
	\\*[-1ex] &\wwrel{\ppe}{}
					\qquad{}+\tfrac{\rf{\tp_2}2}{\rf\kp2}		g_{\tp_2+2,\tp_3} 
					+\tfrac{\tp_2 \tp_3}{\rf\kp2}		g_{\tp_2+1,\tp_3+1} 
				\bigr)
	\\	&\wwrel{\ppe}{}
				\bin+
				\bigl(
					\tfrac{\tp_3}{\kp }		g_{\tp_1+\tp_2,\tp_3+1}
				\bigr)
	\\* &\wwrel{\ppe}{}
				\bin+
				\bigl(
					\tfrac{\tp_1 \tp_3}{\rf\kp2}		g_{\tp_1+1,\tp_2} + 
					\tfrac{\tp_2 \tp_3}{\rf\kp2}		g_{\tp_1,\tp_2+1} 
				\bigr)
	\end{align*}
	where $g_{x,y}$ depends on the prediction scheme:
	\begin{enumerate}[label=(\roman*)]
		\item \mbox{\makeboxlike[l]{2-bit sc: }{1-bit:}}
		$
				g_{x,y}
			=
				2xy/\rf{(x+y)}2
		$,
		\item \mbox{\makeboxlike[l]{2-bit sc: }{2-bit sc:}}
		$
				g_{x,y}
			=
				\frac12 \geoDirichletExp[2]{x}{y}
		$,
		\item \mbox{\makeboxlike[l]{2-bit sc: }{2-bit fc:}}
		$
				g_{x,y}
			=
				\frac{2xy}{\rf{(x+y)}{2}}\geoDirichletExp[1]{x+1}{y+1}
				+\geoDirichletExp[1]{x}{y}
		$.
	\end{enumerate}
	In the limit $k\to\infty$ s.\,t.\ 
	$\vect t / k \to \vect \tau \in [0,1]^s$, we find
	$\E[\branchmisses_{\!n}] \sim \frac{a^*}{\contentropy[]}\, n\ln n$
	for $\contentropy[]$ defined in \wref{eq:limit-entropy} and an
	algorithm-dependent constant $a^*$\!:
	\begin{align*}
			a^*_{\mathrm{CQS}}
		&\wwrel=
			\steadystatemissrate{}(\tau_1) \text{ and}
	\\
			a^*_{\mathrm{YQS}}
		&\wwrel= 
			  (\tau_1+\tau_2) \cdot \steadystatemissrate{}(\tau_2+\tau_3)
	\\*	&\wwrel{\ppe}{}
			\bin+ (\tau_1+\tau_2)(\tau_2+\tau_3) \cdot \steadystatemissrate{}(\tau_2)
	\\*	&\wwrel{\ppe}{}
			\bin+ \tau_3 \cdot\steadystatemissrate{}(\tau_1+\tau_2)
	\\*	&\wwrel{\ppe}{}
			\bin+ \tau_3 (\tau_1+\tau_2) \cdot \steadystatemissrate{}(\tau_1) ,
	\end{align*}
	where $\steadystatemissrate{}$ is the steady-state 
	miss-rate function of the prediction scheme,
	see 
	\wref{eq:steady-state-predictability-1bit},
	\wref{eq:steady-state-predictability-2bitsc} 
	resp.~\wref{eq:steady-state-predictability-2bitfc}.
\end{theorem}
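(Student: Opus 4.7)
The plan is to use the Quicksort Recurrence (\wref{thm:leading-term-expectation-hennequin}) to reduce the whole theorem to computing the leading linear coefficient $a$ of the expected branch-miss count of one partitioning step. That coefficient is then read off from \eqref{eq:toll-CQS-generic} and \eqref{eq:toll-YQS-generic} after conditioning on the pivot spacings $\vect D \sim \dirichlet(\vect t + 1)$: the steady-state argument of \wref{sec:branch-prediction-schemes} turns the inner per-execution miss probability into $\steadystatemissrate{}(\btprob{i}(\vect D))$ (absorbing the transient mixing of the predictor's Markov chain into the $\Oh(1)$ of the partitioning cost), and \wref{tab:cmps-execution-frequencies} gives the conditional execution frequency at each comparison location. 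Thus for CQS, $a_{\mathrm{CQS}} = \E[\steadystatemissrate{}(D_1)]$, and for YQS, $a_{\mathrm{YQS}}$ is a sum over the four comparison locations of $\E[F_l(\vect D) \cdot \steadystatemissrate{}(\btprob{yl}(\vect D))]$ with the branching probabilities from \eqref{eq:branch-taken-prob-YQS} and $F_l$ the frequency factor read off the table.

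These Dirichlet expectations are then reduced to the scalar quantity $g_{x,y} = \E[\steadystatemissrate{}(X)]$ with $X \sim \mathrm{Beta}(x,y)$ using two standard facts. First, the size-biasing identity $\E[D_i\, g(\vect D)] = (\alpha_i/\alpha_0)\E[g(\vect D')]$ for $\vect D' \sim \dirichlet(\vect\alpha + e_i)$ (and its bivariate extension for $\E[D_iD_j\, g(\vect D)]$) shifts the prefactors $D_1+D_2$, $(D_1+D_2)(D_2+D_3)$ etc.\ into the Dirichlet parameters. Second, the aggregation/neutrality property says that under any $\dirichlet(\alpha_1',\alpha_2',\alpha_3')$, the ratio $D_2/(D_2+D_3)$ is $\mathrm{Beta}(\alpha_2',\alpha_3')$-distributed and independent of $D_2+D_3$, and similarly for $D_1/(D_1+D_2)$. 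Applied mechanically to each comparison location, these reductions yield exactly the twelve summands of $a_{\mathrm{YQS}}$; the two-term expression for $a_{\mathrm{CQS}}$ is the single-variable specialisation. Plugging in the explicit miss-rate functions from \eqref{eq:steady-state-predictability-1bit}, \eqref{eq:steady-state-predictability-2bitsc} and \eqref{eq:steady-state-predictability-2bitfc} gives the three forms of $g_{x,y}$: the 1-bit case is the elementary Beta moment $\E[2X(1-X)] = 2xy/\rf{(x+y)}{2}$; the two 2-bit cases become, after writing the rational $\steadystatemissrate{}$ as (polynomial) plus (constant)$/(1 - c\,p(1-p))$, expressions in the tailored integrals $\geoDirichletExp[c]{x}{y}$ of \eqref{eq:def-geo-dirichlet-exp}, whose closed forms are given in \wref{fig:geometric-beta-integrals}.

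The $k\to\infty$ limit follows from concentration: as $\vect\alpha = \vect t+1$ scales like $k\vect\tau$, $\vect D$ converges in probability to $\vect\tau$; since $\steadystatemissrate{}$ and the rational functions $\btprob{yl}$ are bounded and continuous on the interior of the simplex (and $\vect\tau$ lies there), bounded convergence yields $\E[F_l(\vect D)\steadystatemissrate{}(\btprob{yl}(\vect D))] \to F_l(\vect\tau)\steadystatemissrate{}(\btprob{yl}(\vect\tau))$; combined with $\discreteEntropy \sim \contentropy$ from \eqref{eq:limit-entropy}, this produces the stated asymptotic formulas for $a^*_{\mathrm{CQS}}$ and $a^*_{\mathrm{YQS}}$. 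The hard part of the argument is not conceptual but combinatorial bookkeeping: size-biasing the YQS prefactors generates many terms whose grouping back into the pattern $g_{x,y}$ must be tracked carefully, and obtaining the explicit closed forms of \wref{fig:geometric-beta-integrals} requires an elementary but technical polynomial long division of the integrand and careful treatment of the residue terms modulo $3$ resp.\ modulo $4$.
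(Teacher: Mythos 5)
Your proposal is correct and follows essentially the same route as the paper: reduce to one partitioning step via \wref{thm:leading-term-expectation-hennequin}, condition on $\vect D \eqdist \dirichlet(\vect t+1)$, apply the powers-to-parameters (size-biasing), aggregation, and zoom lemmas to reduce each of the $\E\bigl[F_l(\vect D)\steadystatemissrate{}(\btprob{yl}(\vect D))\bigr]$ terms to a Beta expectation $g_{x,y}$, evaluate $g$ for the three schemes via the geometric Beta integrals, and pass to $k\to\infty$ by concentration of $\vect D$ about $\vect\tau$. One bookkeeping slip: expanding the YQS prefactors $(D_1+D_2)$, $(D_1+D_2)(D_2+D_3)$, $D_3$, $D_3(D_1+D_2)$ yields $2+4+1+2=9$ summands, not twelve, and $a_{\mathrm{CQS}}=g_{\tp_1,\tp_2}$ is a single term obtained by symmetrising the two comparison locations, not a ``two-term expression''; neither affects the validity of the argument.
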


\begin{proof}
We plug the different steady-state miss rate functions into
\wref{eq:toll-CQS-generic} resp.\ \wref{eq:toll-YQS-generic}
and apply \wref{thm:leading-term-expectation-hennequin}.
What remains is to compute the leading term constants, \ie, 
$\E[\ui {C_n}i \steadystatemissrate{}(\btprob{i})]$ for all comparison locations.
\wpref{tab:cmps-execution-frequencies} gives the expected execution 
frequencies $\ui{C_n}i$ conditional on~$\vect D$ 
and the branch taken probabilities are 
$\btprob{c} = D_1$ for CQS and as given in
\wref{eq:branch-taken-prob-YQS} for YQS.

Using the properties of the Dirichlet distribution collected in 
\wref{app:distributions}, we can rewrite the involved expectations/integrals
until we can either evaluate them explicitly (as is the case for 1-bit)
or express them in terms of $\geoDirichletExp[c]ab$.
Full detail computations are given in \wref{app:computations}.

For the $k\to\infty$ part, one might compute the limit of the above terms;
however, there is a simpler direct argument:
For $k\to\infty$ s.\,t.\ $\vect t/k \to \vect \tau$ the $\dirichlet(\vect t+1)$ distribution
degenerates to a deterministic vector, \ie, $\vect D \to \vect\tau$ in probability.
By the \textsl{continuous mapping theorem}, we also have the limit (in probability)
$\steadystatemissrate{}(D_1) \to \steadystatemissrate{}(\tau_1)$ and thus 
$\E[\steadystatemissrate{}(D_1)] \to \steadystatemissrate{}(\tau_1)$.
\end{proof}

\section{Discussion}
\label{sec:discussion}

\begin{table*}
	\def\fr{\sfrac}
	\def\ut{\tiny{\textsl{(large, but explicit term)}}}
	\small
	\plaincenter{%
	\begin{tabular}{lrllll}
	\toprule
	            &          & \multicolumn2{c}{Classic Quicksort (\generalClassictM)} 
	                       & \multicolumn2{c}{Yaroslavskiy (\generalYarostM)}\\
	\midrule               
	            & 1-bit    & $\fr{2}{3}$                           
	                          & $= 0.\overline6$ 
	                       & $\fr{101}{50}$                      
	                          & $= 0.67\overline3$ \\
	no sampling & 2-bit sc & $\fr{\pi}{2}-1$
	                          & $\approx 0.57080$                       
	                       & $\fr{31\pi}{40}-\fr{37}{20}$
	                          & $\approx 0.58473$\\
	            & 2-bit fc & $\fr{4\pi}{\sqrt3}-\fr{20}{3}$ 
	                          & $\approx 0.58853$ 
	                       & $\fr{49\pi}{5\sqrt{3}} - \fr{1288}{75}$
	                          & $ \approx 0.60190$       \\
	\midrule
	$k=5$, 
	            & 1-bit    & $\fr{180}{259}$
	                          & $\approx 0.69498$
	                       & $\fr{274}{399}$
	                          & $\approx 0.68671$  \\
	$\vect{t}_\mathrm{CQS}=(2,2)$,            
	            & 2-bit sc & $\fr{225\pi}{74}-\fr{330}{37}$
	                          & $\approx 0.63322$
	                       & $\fr{785\pi}{532} - \fr{535}{133}$
	                          & $\approx 0.61306$
	                       \\
	$\vect{t}_\mathrm{YQS}=(1,1,1)$
	            & 2-bit fc & $\fr{1200\pi\sqrt{3}}{37} - \fr{45\.540}{259}$
	                          & $\approx 0.64766$                
	                       & $\fr{1280\pi\sqrt{3}}{133} - \fr{20\.644}{399}$
	                          & $\approx 0.62899$ \\
	\midrule
	$k=5$, 
	            & 1-bit    & $\fr{600}{959}$
	                          & $\approx 0.62565$
	                       & $\fr{4070}{6419}$
	                          & $\approx0.63406$ \\
	$\vect{t}_\mathrm{CQS}=(4,0)$,
	            & 2-bit sc & $\fr{420}{137} - \fr{225\pi}{274}$
	                          & $\approx 0.48592$
	                       & $\fr{3135}{917} - \fr{3405\pi}{3668}$
	                          & $\approx 0.50242$ \\
	$\vect{t}_\mathrm{YQS}=(0,3,0)$             
	            & 2-bit fc & $\fr{23\.340}{959}-\fr{600\pi\sqrt{3}}{137}$
	                          & $\approx 0.50691$                     
	                       &  $\fr{335\.500}{6419} - \fr{8720\pi\sqrt{3}}{917}$
	                          & $\approx 0.52299$                      \\
	\midrule
	$k\to\infty$,
	            & 1-bit    & $\fr{1}{2\ln 2}$
	                          & $\approx 0.72135$
	                       & $\fr{7}{9\ln 3}$
	                          & $\approx 0.70796$  \\
	$\vect{\tau}_\mathrm{CQS}=(\fr12,\fr12)$, 
	            & 2-bit sc & $\fr{1}{2\ln 2}$
	            	          & $\approx 0.72135$
	                       & $\fr{11}{15 \ln 3}$
	                          & $\approx0.66751$ \\ 
	$\vect{\tau}_\mathrm{YQS}=(\fr13,\fr13,\fr13)$      
	            & 2-bit fc & $\fr{1}{2\ln 2}$
	            	          & $\approx 0.72135$
	                       & $\fr{47}{63\ln 3}$
	                          & $\approx0.67907$  \\
	\midrule
	$k\to\infty$,             
	            & 1-bit    & \ut
	                         &$\approx 0.55370$
	                       & \ut
	                         & $\approx 0.55987$             \\
	$\vect{\tau}_\mathrm{CQS}=(\fr1{10},\fr9{10})$
	            & 2-bit sc & \ut
	                         & $\approx 0.33762$
	                       & \ut
	                         & $\approx 0.34509$                          \\
	$\vect{\tau}_\mathrm{YQS}=(\fr1{10},\fr8{10},\fr1{10})$ 
	            & 2-bit fc & \ut 
	                         & $\approx 0.35900$
	                       & \ut 
	                         & $\approx 0.36746$             \\
	\bottomrule
	\end{tabular}%
	}
	\caption{%
		Coefficient of the leading term of BMs in 
		\generalClassictM and \generalYarostM
		for various combinations of branch prediction schemes
		and sampling parameters.
	}
	\label{tab:cqs-vs-yaroslavskiy}
\end{table*}

\begin{figure}
	\plaincenter{\includegraphics[width=0.66\linewidth]{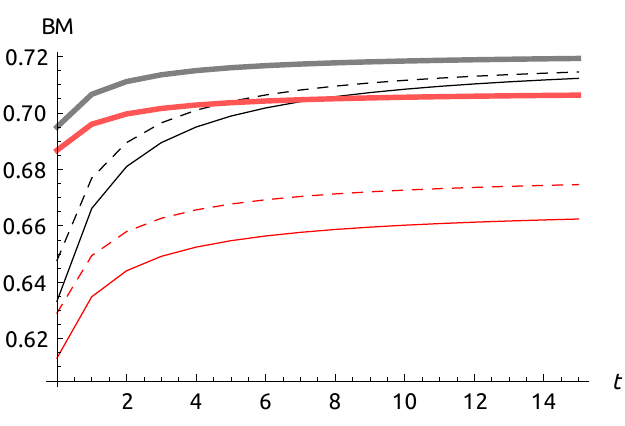}}
	\caption{%
		Branch mispredictions, as a function of~$t$, 
		in CQS (black) and YQS (red) with 1-bit branch prediction (fat), 
		2-bit saturating counter (thin solid) and 2-bit flip-consecutive (dashed)
		using symmetric sampling:
		$\vect{t}_\text{CQS}=(3t+2,3t+2)$ and
		$\vect{t}_\text{YQS}=(2t+1,2t+1,2t+1)$
	}
	\label{fig:BM-cqs-vs-yqs-sym}
\end{figure}

\begin{figure}
	\plaincenter{\includegraphics[width=0.66\linewidth]{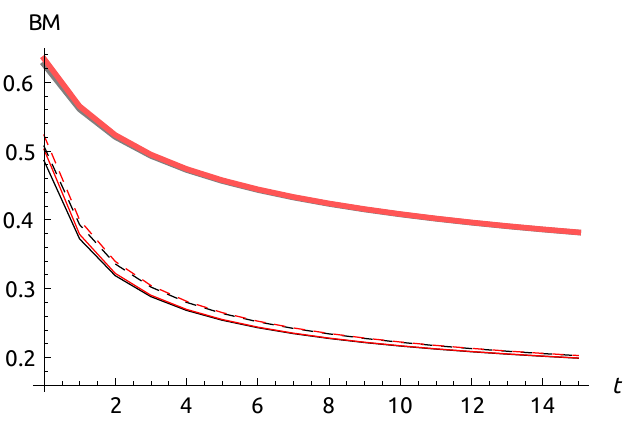}}
	\caption{%
		Branch mispredictions, as a function of~$t$, 
		in CQS (black) 
		and YQS (red) with 1-bit (fat), 
		2-bit sc (thin solid) and 2-bit fc (dashed) predictors,
		using extremely skewed sampling:
		$\vect{t}_\text{CQS}=(0,6t+4)$ and $\vect{t}_\text{YQS}=(0,6t+3,0)$
	}
	\label{fig:BM-cqs-vs-yqs-skew}
\end{figure}

\noindent
\wpref{tab:cqs-vs-yaroslavskiy} summarizes the leading factor (the
constant in front of $n\ln n$) in the total expected number
of branch mispredictions for both CQS and YQS 
under the various branch prediction schemes
and different pivot sampling strategies.

In practice, classic Quicksort implementations typically use median-of-3 
sampling, while in Oracle's YQS from Java~7 
the chosen pivots are the second
and the fourth in a sample of 5 (tertiles-of-5). 
With 1-bit prediction, this results in approximately
$0.6857\. n\ln n$ vs.\ $0.6867\. n\ln n$ BMs in the asymptotic average;
for the other branch prediction strategies the difference is similar.
It is very unlikely that the substantial differences in running times between
CQS and YQS are caused by this tiny difference in the number of branch misses.

\subsection{BM-Optimal Sampling}
\wref{fig:BM-cqs-vs-yqs-sym} shows the leading factor of BMs as a function of $t$,
where pivots are chosen equidistantly from samples of size $k=6t+5$, \ie,
in CQS we use the median as pivot, in YQS the tertiles.
Notice that, contrary to many other performance measures, 
sampling can be harmful with respect to branch mispredictions.
In particular, notice that with symmetric sampling (\ie, median-of-$(2t+1)$
for CQS, tertiles-of-$(3t+2)$ for YQS) the expected number of BMs 
\emph{increases} (and approaches a limit) as the size of the sample grows. 

We can weaken this undesirable effect by choosing \emph{skewed} pivots.
\wref{fig:BM-cqs-vs-yqs-skew} shows the same sample sizes as
\wref{fig:BM-cqs-vs-yqs-sym}, with the $\vect t$ that gives the minimal
number of BM for this sample size, which is to choose the extreme order statistics.
Not surprisingly, CQS and YQS 
behave almost identically when we use such highly skewed sampling parameters.
It is worth mentioning, though, that for \generalYarostM,
$\vect{t}=(0,6t+3,0)$ is better than
$\vect{t}=(6t+3,0,0)$ or $\vect{t}=(0,0,6t+3)$, as far as BMs
are concerned. Of course, such skewed $\vect{t}$ seriously penalize other
performance measures such as comparisons, cache misses, etc.\ as
unbalanced partitions become more likely.

In the limiting situation of very large samples, 
\ie, for $k\to\infty$ with $\vect t / k \to \vect \tau \in[0,1]^s$,
the leading coefficient of the expected number of BMs tends to~0 if we pick
the smallest or the largest element in the sample, \ie, 
$\vect{\tau}_\text{CQS}=(0,1)$ or $\vect{\tau}_\text{CQS}=(1,0)$ in CQS.
The same happens in YQS if we pick extremal pivots, \ie, if we take
$\vect{\tau}_\text{YQS}=(0,0,1)$ (the two smallest elements),
$\vect{\tau}_\text{YQS}=(0,1,0)$ (the smallest and the largest),
or $\vect{\tau}_\text{YQS}=(1,0,0)$ (the two largest elements).
This limiting behavior is independent of the branch prediction strategy,
but the trend doesn't show up unless the sample size is unrealistically large.
Due to its asymmetries, the convergence to the limit in YQS is not equal
for the three choices of $\vect{\tau}$.

\subsection{Overall Costs for CQS}

As the use of very skewed pivots severely runs against other important 
cost measures (including comparisons and cache misses), we need
to consider the \emph{combined} cost of several measures to determine
choices for~$\vect t$ with good practical performance. 
As a simplified model 
(which still exhibits the fundamental features of the problem) we shall 
consider the linear combination of low-level machine instructions 
(here: Bytecodes ($\bytecodes$)) and branch misses:
\[
Q \wwrel= \bytecodes \bin+ \xi \cdot \branchmisses.
\]  
The relative cost $\xi$ of one BM (in terms of BCs) depends on the machine.
As a mispredicted branch always entails a complete stall of the pipeline,
we need at least $\xi \ge L$ (the number of stages) to recover full speed.
Rollback of erroneously executed instructions might require another $L$ 
cycles of additional work, so $L \le \xi \le 2L$ seems a reasonable range.

As long as we are only interested in the main order term of $Q$, 
we only need the main order term of $\bytecodes$. 
For \generalClassictM with 
$\vect{t}=(t_1,t_2)$ we
have \citep{Martinez2001,Wild2012thesis}
\begin{multline*}
		\E[\toll[n]{\bytecodes}]
	\wwrel= 
\\		\left(6+18
        	\frac{(t_1+1)(t_2+1)}{(k+1)(k+2)} \right)\cdot n \bin+ \Oh(1) \,.
\end{multline*}
We are only interested in the coefficient $q_\xi(\vect{t})$ 
of the leading term of $\E[\toll[n]{Q}]$ which 
depends on $\xi$ and the sampling parameter $\vect{t}$.

\begin{figure}
	\plaincenter{%
	\includegraphics[width=0.66\linewidth]{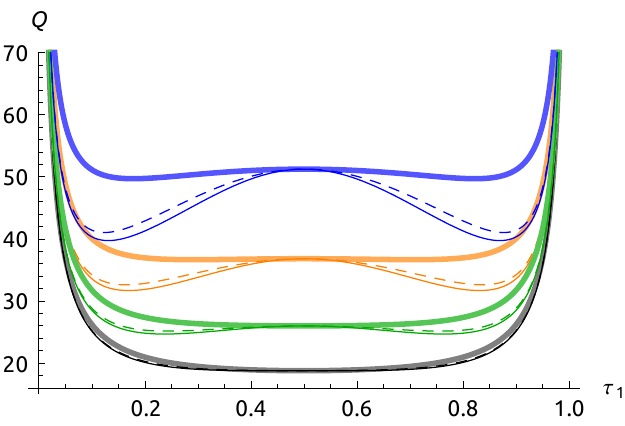}%
	}
	\caption{%
		The function $q_\xi(\vect\tau)$ for CQS with sample size
		$k\to\infty$ and $\vect{\tau}=(\tau_1,1-\tau_1)$, for 1-bit (fat lines), 
		2-bit sc (thin solid) and 2-bit fc (dashed). 
		For each branch prediction strategy, we give four plots 
		corresponding to $\xi=5$ (black), $\xi=15$ (green), $\xi=30$ (orange) and $\xi=50$ 
		(blue).}
	\label{fig:opt_q_lim_cqs}
\end{figure}

Again, we consider the limit $k\to\infty$ to reduce the parameter space 
and write
\[
		q_\xi(\vect{\tau}) 
	\wrel= 
		\lim_{\substack{k\to\infty:\\ \vect{t}/k\to\vect{\tau}} }
			q_\xi(\vect{t}).
\]
Figure~\ref{fig:opt_q_lim_cqs} shows the value of $q_\xi(\vect\tau)$ for several
values of $\xi$ and the three different
branch prediction strategies. 
For $\xi=5$, the three  prediction schemes yield almost identical results
as the weight of BMs is rather small. 
For larger values of $\xi$, the shape of the function $q_\xi(\vect\tau)$
changes and differences between the prediction schemes
become more pronounced.
The two 2-bit strategies do not differ
very significantly (2-bit sc is always slightly better), but 
both perform much better than 1-bit prediction.

\subsection{Total-Costs-Optimal Sampling in CQS}

It is natural to ask for choices of 
$\vect{t}$ (and implicitly $k=k(\vect{t})$) that
minimize $q_\xi(\vect t)$, for a given value of $\xi$.
In the limit of $k\to\infty$, we look for optimal $\vect{\tau^*}$.

It is well-known that if $\xi=0$ 
(the contribution of BMs to $q_\xi(\vect{t})$ is disregarded) 
for finite-size samples 
the best choice for the pivot is the median of the sample, thus
for any $k$,
\[
\vect{t}^\ast(0,k)=\bigr(\lfloor\tfrac{k-1}{2}\rfloor,
                        \lceil\tfrac{k-1}{2}\rceil\bigr).
\]
Here, by $\vect{t}^\ast=\vect{t}^\ast(\xi, k)$ 
we mean the sampling parameter that minimizes $q_\xi(\vect{t})$, for
fixed $\xi$ and sample size $k$. 
In the limit when $k\to\infty$, we thus have that
$\vect{\tau^*}=(\frac{1}{2},\frac{1}{2})$
as long as $\xi=0$ \cite{Martinez2001}. 

As can be seen in \wref{fig:opt_q_lim_cqs}, this changes for larger values of $\xi$:
There is a threshold~$\xi_c$ such that
for $\xi > \xi_c$ the optimal sampling parameter 
is
$\vect{\tau}^\ast=(\tau^\ast,1-\tau^\ast)$ with $\tau^\ast < 1/2$.
(By symmetry, $(1-\tau^\ast,\tau^\ast)$ is also optimal).
For $\xi>\xi_c$, $\tau^\ast$ is
the unique real solution in $[0,1/2)$ of
\[
	\tfrac{d}{d\tau} q_\xi\bigl((\tau,1-\tau)\bigr) \wwrel= 0.
\]
\wref{fig:opt_psi_cqs} shows the function $\tau^\ast=\tau^\ast(\xi)$
for our branch prediction strategies.
The critical threshold $\xi_c$ and
the shape of $\tau^\ast(\xi)$ differ depending on the
prediction scheme, but $\tau^\ast$ always
decreases steadily to 0 as $\xi\to\infty$, 
in accordance with our finding that the
number of BMs is minimized when ${\vect{t}}/{k}\to\vect{\tau}=(0,1)$. 
$\xi_c$ is the solution of the equation
\[
	\tfrac{d^2}{d\tau^2} q_\xi\bigl((\tau,1-\tau)\bigr) \Bigr|_{\tau=1/2} \wwrel= 0,
\]
because $\xi=\xi_c$ is the point where $\tau=1/2$ changes from
a local minimum to a local maximum.
The respective thresholds are
\begin{align*}
		\ui{\xi_c}{\text{1-bit}}
	&\wwrel= 
		\frac{3(7-6\ln 2)}{2\ln 2-1} 
		\wrel\approx 22.0644,
\\
		\ui{\xi_c}{\text{2-bit\,sc}}
	&\wwrel= 
		\frac{3(7-6\ln 2)}{4\ln 2-1} 
		\wrel\approx 4.8084,
\\
		\ui{\xi_c}{\text{2-bit\,fc}}
	&\wwrel= 
		\frac{9(7-6\ln 2)}{10\ln 2-3}
		\wrel\approx 6.5039.
\end{align*}

\begin{figure}
	\plaincenter{\includegraphics[width=0.66\linewidth]{%
		plots/q-CQS-limit-k-inf-optimal-tau-as-function-in-xi%
	}}
	\caption{
		The function $\tau^\ast=\tau^\ast(\xi)$ for CQS with sample size
		$k\to\infty$ and $\vect{\tau}=(\tau,1-\tau)$, for 1-bit (fat line), 
		2-bit sc (solid thin) and 2-bit fc (dashed).
	}
	\label{fig:opt_psi_cqs}
\end{figure}

\subsection{Comparison with Experimental Data}

Kaligosi and Sanders~\cite{Kaligosi2006branch} report 
that running time was significantly improved by choosing skewed pivots
on a Pentium 4 Prescott processor.
They picked the $(\tau\cdot n)$-th of the array to partition 
(because the array contents where regular and known in advance); this same 
effect can be achieved via sampling with $k\to\infty$ and picking the 
$(\tau\cdot k)$-th within the sample at each stage~\cite{Martinez2001}.
They also did a (partial) analysis of BMs in the infinite-size sampling 
regime. From their experiments, 
they concluded that $\tau^\ast\approx 1/10$. 
If we assume that the Pentium 4 used 2-bit fc predictor
this corresponds to $\xi\approx 73$ (for 2-bit sc, we get $\xi\approx83$).
As $L=31$ for this processor~\citep{Biggar2008}, this means that $\xi$ is
a bit larger than expected\,---\,or that our simple model is not fully accurate here.
As the optimal $\tau^*$ is not much different for 
$\xi=30$ or $\xi=50$, see \wref{fig:opt_q_lim_cqs},
the model fits the observations satisfactorily.

Experiments conducted in other architectures (Opteron, Athlon, Sun) 
with shorter pipelines ($L\le20$) did not support the use of highly skewed
pivots~\citep{Biggar2008,Kaligosi2006branch}, 
and $\tau^\ast$ would be closer to 1/2.

In conclusion, despite we have proposed a very simplistic model
($\bytecodes+\xi \branchmisses$) of running time, it seems
to capture essential features of the problem (the most important
contribution missing is that due to memory hierarchies). 
The model has some explanatory power, even at a quantitative level, for
the phenomena observed in experiments.
As processor manufacturers often do not
provide information or just very sketchy descriptions on their architectures,
we can only ``guess'' the used branch prediction strategy and the value 
of $\xi$. 

\todoin[disable]{%
Put this back in for a full version
	
	For finite sizes, a closed form for $\vect{t}^\ast$ is not easy to obtain, but 
	specific values can be obtained rather easily. For brevity, let us 
	write $\vect{t}=(t,k-1-t)$. It is not 
	difficult to prove that when $\xi$ is small then $t^\ast=\lfloor(k-1)/2\rfloor$. As 
	$\xi$ grows $t^\ast$ decreases, and from some point on $t^\ast=0$. 
	The precise locations where $t^\ast$ changes from some value $r$ to $r-1$ 
	can be easily computed. Of course, these locations differ according to the
	branch prediction strategy, but in all three cases the qualitative behavior
	is very similar. Figure~\ref{fig:opt_t_cqs} shows $t^\ast$ as a function of 
	$\xi$ for the two branch prediction strategies (2-bit sc is not shown; 
	it is almost idential to 2-bit fc) and two values of $k$. The
	plots have been normalized to be on the same scale, they actually 
	show $(t^\ast+1)/(k+1)$.
	
	\begin{figure}
	\plaincenter{\includegraphics[width=0.66\linewidth]{plots/opt_t_cqs}}
	\caption{Normalized $t^\ast$ ($(t^\ast+1)/(k+1)$) in CQS for 1-bit (black) 
	and 2-bit fc (blue), when $k=5$ (solid) and $k=11$ (dashed). The functions 
	$\tau^\ast$ are also plotted for reference (black dotted, blue dotted).}
	\label{fig:opt_t_cqs}
	\end{figure}
	
	The computation of $\vect{\tau}^\ast$ (for infinite size samples) 
	and $\vect{t}^\ast$ (for finite size samples) in YQS goes along the same lines
	as we have shown for CQS. The trade-offs between instructions (BC) and 
	branch misses, and the inherent asymmetries of Yaroslavskiy's partitioning 
	algorithm also show up in non-symmetric optimal sampling parameters. 
	A detailed study, including plots, numerical optimization, etc.
	will appear in a full version of this paper.

	For instance, even when $\xi=0$, 
	$\vect{\tau}^\ast=(0.206772, 0.348562, 0.444666)$~\cite{NebelWild2014};
	with this $\vect{\tau}^\ast$ the expected 
	number of bytecodes is $16.383\ldots n\ln n$ cf.\ the 
	$16.991\ldots n\ln n$ bytecodes that we get with the symmetric 
	sampling parameter 
	$\vect{\tau}=\left(\frac13, \frac13, \frac13\right)$.
}

\todoin[disable]{%
	We have the following results:
	\begin{itemize}
		\item 
		Comparison with results from \cite{Kaligosi2006branch}:
		Beware: \citet{Kaligosi2006branch} call static prediction an optimal, oracle-based 
		predictor, \emph{not} a compile-time static prediction.
		
		1bit: same results
		
		2bit (fc): they do ``worst case'', their random pivot result is only an upper bound,
		the results for $\alpha$-skewed pivots are exactly the same
		
		\item
		Comparison of CQS and YQS for some small realistic $\vect t$.
		Result: Without sampling YQS slighty worse, with symmetric sampling slightly better:
		We compare $\vect t = (3r-1,3r-1)$ CQS to $\vect t=(2r-1,2r-1,2r-1)$ YQS, \ie, 
		for both we have sample size $k=6r-1$.
		CQS/YQS is between 1 and 2\,\% for 1-bit and between 2 and 6 \% for 2-bit fc
		
		\item
		optimal $\vect t$ for linear combination costs
	\end{itemize}
}

\small
\bibliography{quicksort-refs}

\appendix
\section*{Appendix}
\section{Index of Used Notation}
\label{app:notations}
\def\mydots{\xleaders\hbox to.75em{\hfill.\hfill}\hfill}

\newlength\tmpLenNotations
\newenvironment{notations}[1][8em]{%
	\small
	\newcommand\notationentry[1]{%
		\settowidth\tmpLenNotations{##1}%
		\ifthenelse{\lengthtest{\tmpLenNotations > \labelwidth}}{%
			\parbox[b]{\labelwidth}{%
				\makebox[0pt][l]{##1}\\%
			}%
		}{%
			\mbox{##1}%
		}%
		\mydots\relax%
	}%
	\begin{list}{}{%
		\setlength\labelsep{0em}%
		\setlength\labelwidth{#1}%
		\setlength\leftmargin{\labelwidth+\labelsep+1em}%
		\renewcommand\makelabel{\notationentry}%
	}
	\newcommand\notation[1]{\item[##1]}
	\raggedright
}{%
	\end{list}
}

In this section, we collect the notations used in this paper.
(Some might be seen as ``standard'', but we think
including them here hurts less than a potential misunderstanding caused by
omitting them.)

\subsubsection*{Generic Mathematical Notation}
\begin{notations}
\notation{$\ln n$}
	natural logarithm.
\notation{$\vect x$}
	to emphasize that $\vect x$ is a vector, it is written in \textbf{bold};\\
	components of the vector are not written in bold: $\vect x = (x_1,\ldots,x_d)$.\\
	operations on vectors are understood elementwise, \eg,
	$\vect x+1 = (x_1+1,\ldots,x_d+1)$
\notation{$X$}
	to emphasize that $X$ is a random variable it is Capitalized.
\notation{$\harm{n}$}
	$n$th harmonic number; $\harm n = \sum_{i=1}^n 1/i$.
\notation{$\dirichlet(\vect \alpha)$}
	Dirichlet distributed random variable with parameter
	$\vect \alpha \in \R_{>0}^d$; 
	see \wref{app:distributions}.
\notation{$\uniform(a,b)$}
	uniformly in $(a,b)\subset\R$ distributed random variable. 
\notation{$\gammadist(k,\theta)$}
	Gamma distributed random variable with 
	shape parameter $k\in\R_{>0}$ and scale parameter $\theta\in\R_{>0}$.
\notation{$\gammadist(k)$}
	$\gammadist(k,1)$ distributed random variable.
\notation{$\BetaFun(\alpha_1,\ldots,\alpha_d)$}
	$d$-dimensional Beta function; 
	given in \wildpageref[equation]{eq:def-beta-function}{\eqref}.%
\notation{{$\E[X]$}}
	expected value of $X$.
\notation{{$\E[X\given Y]$}}
	the conditional expectation of $X$ given $Y$.
\notation{$\Prob(E)$, $\Prob(X=x)$}
	probability of an event $E$ resp.\ probability for random variable $X$ to
	attain value $x$.
\notation{$X\eqdist Y$}
	equality in distribution; $X$ and $Y$ have the same distribution.
\notation{$X_{(i)}$}
	$i$th order statistic of a set of random variables $X_1,\ldots,X_n$,\\
	\ie, the $i$th smallest element of $X_1,\ldots,X_n$. 
\notation{$a^{\underline b}$, $a^{\overline b}$}
	factorial powers notation of \citep{ConcreteMathematics}; 
	``$a$ to the $b$ falling resp.\ rising''.
\notation{$\dirichletExpectation{f(\vect X)}{\vect\alpha}$}
	expectation of $f(\vect X)$ over $\dirichlet(\vect \alpha)$ distributed $\vect X$; 
	formally for $\vect\alpha \in \R_{>0}^d$\\
	$ \displaystyle
			\dirichletExpectation{f(\vect X)}{\vect\alpha} 
		= 
			\int_{\Delta_d} f(\vect x) 
				\frac{x_1^{\alpha_1-1}\cdots x_d^{\alpha_d-1}}{\BetaFun(\vect\alpha)}
			\mu(d\vect x)
	$,\\
	where $\Delta_d$ is the standard $(d-1)$-dimensional simplex, see \wref{eq:def-delta-d}.
\notation{{$\geoDirichletExp[c]ab$}}
	see \wildtpageref[equation]{eq:def-geo-dirichlet-exp}{\eqref}{}%
		; $
				\geoDirichletExp[c]ab
			= 
				\dirichletExpectation{\frac{X_1 X_2}{\frac1c-X_1 X_2}}{a,b}
		$
\notation{{$\discreteEntropy[] = \discreteEntropy[\vect t]$}}
	discrete entropy; defined in
	\wildpageref[equation]{eq:discrete-entropy}{\eqref}.
\notation{{$\contentropy[] = \contentropy[\vect p]$}}
	continuous (Shannon) entropy with base $e$; given in
	\wildpageref[equation]{eq:limit-entropy}{\eqref}.
\end{notations}

\subsubsection*{Input to the Algorithm}
\begin{notations}
\notation{$n$}
	length of the input array, \ie, the input size.
\notation{\arrayA}
	input array containing the items $\arrayA[1],\ldots,\arrayA[n]$ to be
	sorted; initially, $\arrayA[i] = U_i$.
\notation{$U_i$}
	$i$th element of the input, \ie, initially $\arrayA[i] = U_i$.\\
	We assume $U_1,\ldots,U_n$ are i.\,i.\,d.\ $\uniform(0,1)$ distributed.
\end{notations}

\subsubsection*{Notation Specific to the Algorithms}
\begin{notations}
\notation{$s$}
	number of subproblems, \ie, $s=2$ for classic Quicksort
	and $s=3$ for Yaroslavskiy's Quicksort;
	$s-1$ is thus the number of pivots in one partitioning step.
\notation{CQS, YQS}
	CQS is the abbreviation for \textbf classic (single-pivot) 
	\textbf Quick\textbf sort 
	using Sedgewick's partitioning given in \wref{alg:partition-sedgewick};
	YQS likewise stands for \textbf Yaroslavskiy's (dual-pivot) 
	\textbf Quick\textbf sort using the partitioning given in 
	\wref{alg:partition-yaroslavskiy}.
\notation{$\vect t \in \N^{s}$}
	pivot sampling parameter.
\notation{$k=k(\vect t)$}
	sample size; defined in terms of $\vect t$ as 
	$k(\vect t) = t_1+t_2+\cdots+t_{s}+ (s-1) = \|\vect t\|_1 + \dim(\vect t)-1$.
\notation{\isthreshold}
	Insertionsort threshold; for $n\le\isthreshold$, Quicksort recursion is
	truncated and we sort the subarray by Insertionsort.
\notation{$\generalClassictM$, $\generalYarostM$}
	$\generalClassictM$ is the 
	abbreviation for generalized classic Quicksort,
	where the pivot is chosen by generalized pivot sampling with parameter $\vect t$
	and where we switch to Insertionsort for subproblems of size at most
	\isthreshold;
	similarly $\generalYarostM$ for Yaroslavskiy's dual-pivot partitioning.
\notation{$\vect V \in \N^k$}
	(random) sample for choosing pivots in the first partitioning step.
\notation{$P$, $Q$}
	(random) values of chosen pivots in the first partitioning step;
	for classic Quicksort, only $P$.
\notation{{$\vect D\in[0,1]^{s+1}$}}
	(random) spacings of the unit interval $(0,1)$ induced by the pivot(s), \ie, 
	for CQS we have $\vect D = (P, 1-P)$ and
	for YQS $\vect D = (P,Q-P,1-Q)$;
	in both cases, we have $\vect D \eqdist \dirichlet(\vect t + 1)$.
\notation{small element}
	element $U$ is small if $U<P$.
\notation{medium element}
	(only for YQS) element $U$ is medium if $P<U<Q$.
\notation{large element}
	for CQS: element $U$ is large if $P<U$;
	for YQS: if $Q < U$.
\notation{ordinary element}
	the $n-k$ array elements that have not been part of the sample.
\notation{$k$, $g$, $\ell$}
	index variables used in the partitioning methods, see
	\wref{alg:partition-sedgewick} and \wpref{alg:partition-yaroslavskiy};
	$\ell$ only appears in \wref{alg:partition-yaroslavskiy}.
\end{notations}

\subsubsection*{Notation for Analysis of Branch Misses}
\begin{notations}
\notation{BM}
	shorthand for \textbf branch \textbf miss
\notation{$\bmonebit$, $\bmtwobitsc$, $\bmtwobitfc$}
	branch prediction schemes analyzed in this paper; see \wref{sec:branch-prediction-schemes}.
\notation{$\ui C{c1}, \ui C{c2}$}
	key comparison locations in CQS: $\ui C{c1}$ corresponds to the comparison in 
	\wref{lin:classic-comp-1}, $\ui C{c2}$ to \wref{lin:classic-comp-2} of
	\wref{alg:partition-sedgewick}.
\notation{$\ui Cc$}
	the virtual combined comparison location consisting of $\ui C{c1}$ and $\ui C{c2}$.
\notation{$\ui C{y1},\ldots,\ui C{y4}$}
	key comparison locations in YQS: 
	$\ui C{y1}$ is in \wref{lin:yaroslavskiy-comp-1},
	$\ui C{y2}$ in \wref{lin:yaroslavskiy-comp-2},
	$\ui C{y3}$ in \wref{lin:yaroslavskiy-comp-3} and
	$\ui C{y4}$ is in \wref{lin:yaroslavskiy-comp-4} of
	\wref{alg:partition-yaroslavskiy}.
\notation{$\ui {C_n}i$}
	$i\in\{c1,c2,c,y1,y2,y3,y4\}$;
	expected execution frequency of comparison location $\ui Ci$ in the first 
	partitioning step.
\notation{$\btprob{i}$}
	$i\in\{c1,c2,c,y1,y2,y3,y4\}$;
	probability (conditional on $\vect D$) 
	for the branch at comparison location $\ui Ci$ to be taken;
	all executions of this branch are i.\,i.\,d.
\notation{$\bmprob{i}$}
	$i\in\{c1,c2,c,y1,y2,y3,y4\}$;
	probability (conditional on $\vect D$) for the branch at comparison location
	$\ui Ci$ to be mispredicted; depends on the prediction scheme.
\notation{$\steadystatemissrate{\mathit{bps}}$, $\steadystatemissrate{\mathit{bps}}(p)$}
	$\mathit{bps}\in\{\bmonebit,\bmtwobitsc,\bmtwobitfc\}$;\\
	steady-state miss-rate function for the branch prediction scheme $\mathit{bps}$;
	$\steadystatemissrate{\mathit{bps}}(p)$ is the probability for a BM 
	at a branch that is i.\,i.\,d.\ and taken with probability $p$\,---\,in the long run, \ie,
	when the predictor automaton has reached its steady state.
\notation{$\branchmisses_{\!n}$}
	(random) total number of branch misses to sort a random permutation of length $n$.
\notation{\toll{\circ}}
	(random) number of branch misses ($\circ=\branchmisses$), Bytecodes ($\circ=BC$) 
	resp.\ combined costs ($\circ=Q$) 
	of the first partitioning step on a random permutation of size~$n$;
	$\toll[n]{\circ}$ when we want to emphasize dependence on~$n$.
\notation{$a$}
	coefficient of the linear term of $\E[\toll[n]{\,}]$ 
	see \wpref{thm:leading-term-expectation-hennequin}.
\notation{$\vect\tau$}
	for the limiting case when $k\to \infty$, we assume that 
	$\vect t / k = (t_1/k , \ldots, t_s/k) \to \vect \tau$.
	This corresponds to selecting precise order statistics, s.\,t.\
	the relative size of subproblem $i$ becomes precisely $\tau_i$.
\notation{$a^*$}
	limit of $a$ when $k\to\infty$ and $\vect t/k \to \vect\tau$.
\notation{$\xi$}
	cost of one branch miss relative to one CPU cycle;
	used in $Q$.
\notation{$L$}
	length of the instruction pipeline, \ie, the number of stages each instruction is broken into.
\notation{$Q$}
	combined cost measure $Q=\bytecodes + \xi \cdot \branchmisses$, 
\end{notations}

\section{Steady-State Miss-Rate Functions}
\label{app:steady-state-predictability}

In this appendix, we give details on the computation of the steady-state miss-rate
for the two 2-bit predictor variants.
These functions have also been derived by \citet{Biggar2008} and \citet{Kaligosi2006branch}, 
respectively. 
We show the derivations here again for the reader's convenience.

\subsection{2-Bit Saturating Counter}

Consider a branch instruction, which is i.\,i.\,d.\ taken with probability $p$.
The saturating-counter automaton then corresponds to the following Markov chain with
states ordered as in \wpref{fig:2-bit-saturating-counter}:
\begin{align}
		\Pi \wrel= \Pi(p)
	&\wwrel=
		\left(
			\begin{array}{cccc}
			 p & 1-p & 0 & 0 \\
			 p & 0 & 1-p & 0 \\
			 0 & p & 0 & 1-p \\
			 0 & 0 & p & 1-p \\
			\end{array}
		\right)\;.
\end{align}
The stationary distribution is found to be
\begin{align}
		\pi(p)
	&\wwrel=
		\frac1{1-2p(1-p)} 
\\*	&\wwrel{\ppe}{}\cdot
		\left(
			p^3, p^2(1-p), p (1-p)^2, (1-p)^3
		\right).
\end{align}
The next branch execution is mispredicted iff we are in state 1 or 2 and the
branch was not taken or if we are in state 3 or 4, but the branch was taken.
Thus if we assume the predictor has reached its steady-state distribution, then 
we obtain the branch miss probability in dependence of the branch
probability $p$ as
\begin{align}
		\steadystatemissratetwobitsc(p)
	&\wwrel=
		\pi(p) \cdot (1-p,1-p,p,p)^T
\\	&\wwrel=
		\frac{p(1-p)}{1-2p(1-p)}  \;.
\end{align}

\subsection{2-Bit Flip-On-Consecutive}

The analysis is along the same lines as above.
The underlying Markov chain for the 2-bit fc predictor is shown in 
\wpref{fig:2-bit-flip-consecutive}
and has the transition matrix
\begin{align}
		\tilde\Pi \wrel= \tilde\Pi(p)
	&\wwrel=
		\left(
			\begin{array}{cccc}
			 p & 1-p & 0 & 0 \\
			 p & 0 & 0 & 1-p \\
			 p & 0 & 0 & 1-p \\
			 0 & 0 & p & 1-p \\
			\end{array}
		\right)\;.
\end{align}
with stationary distribution
\begin{align}
		\tilde\pi(p)
	&\wwrel=
		\frac1{1-p(1-p)} 
\\*	&\wwrel{\ppe}{}\cdot
		\left(
			p^2, p^2(1-p), p (1-p)^2, (1-p)^2
		\right) .
\end{align}
Again assuming this steady state has been reached, the branch miss
probability is
\begin{align}
		\steadystatemissratetwobitfc(p)
	&\wwrel=
		\tilde\pi(p) \cdot (1-p,1-p,p,p)^T
\\	&\wwrel=
		\frac{2p^2(1-p)^2 + p(1-p)}{1-p(1-p)}  \,.
\end{align}  

\section{Properties of the Dirichlet Distribution}
\label{app:distributions}

We herein collect definitions and basic properties of the \textsl{Dirichlet distribution},
which plays a central r\^ole for the analyses in this paper.
We use the notation $x^{\overline n}$ and $x^{\underline n}$ of
\citet{ConcreteMathematics} for rising and falling factorial powers, respectively.

For $d\in\N$ let $\Delta_d$ be the standard $(d-1)$-dimensional simplex, \ie, 
\begin{align}
\label{eq:def-delta-d}
		\Delta_d
	&\wwrel\ce 
		\biggl\{
			(x_1,\ldots,x_d) 
			\wrel: 
			\forall i : x_i \ge 0 \; 
			\rel\wedge 
			\sum_{\mathclap{1\le i \le d}} x_i = 1
		\biggr\} \;.
\end{align}
Let $\alpha_1,\ldots,\alpha_d > 0$ be positive reals.
A random variable $\vect X \in \R^d$ is said to have the 
\emph{Dirichlet distribution} with \emph{shape parameter} 
$\vect\alpha \ce (\alpha_1,\ldots,\alpha_d)$\,---\,abbreviated as
$\vect X \eqdist \dirichlet(\vect\alpha)$\,---\,if it has a density given by
\begin{multline}
\label{eq:def-dirichlet-density}
		f_{\vect X}(x_1,\ldots,x_d)
	\wwrel\ce 
	\\
	\begin{cases}
			\frac1{\BetaFun(\vect\alpha)} \cdot
			x_1^{\alpha_1 - 1} \cdots x_d^{\alpha_d-1} ,
			& \text{if } \vect x \in \Delta_d \,; \\
			0 , & \text{otherwise} \..
		\end{cases}
\end{multline}
Here, $\BetaFun(\vect\alpha)$ is the \emph{$d$-dimensional Beta function}
defined as the following Lebesgue integral:
\begin{align}
\label{eq:def-beta-function}
		\BetaFun(\alpha_1,\ldots,\alpha_d)
	&\wwrel\ce
		\int_{\Delta_d} x_1^{\alpha_1 - 1} \cdots x_d^{\alpha_d-1} \; \mu(d \vect x)
	\;.
\end{align}
The integrand is exactly the density without the normalization constant
$\frac1{\BetaFun(\alpha)}$, hence $\int f_{\vect X} \,d\mu = 1$ as needed for
probability distributions.

The Beta function can be written in terms of the Gamma function 
$\Gamma(t) = \int_0^\infty x^{t-1} e^{-x} \,dx$
as
\begin{align}
\label{eq:beta-function-via-gamma}
		\BetaFun(\alpha_1,\ldots,\alpha_d)
	&\wwrel=
		\frac{\Gamma(\alpha_1) \cdots \Gamma(\alpha_d)}
		{\Gamma(\alpha_1+\cdots+\alpha_d)} \;.
\end{align}
(For integral parameters $\vect\alpha$, a simple inductive argument and partial
integration suffice to prove~\wref{eq:beta-function-via-gamma}.)

Note that $\dirichlet(1,\ldots,1)$ corresponds to the uniform distribution over
$\Delta_d$.
For integral parameters $\vect\alpha\in\N^d$, $\dirichlet(\vect\alpha)$ is the
distribution of the \emph{spacings} or \emph{consecutive differences} induced by
appropriate order statistics of i.\,i.\,d.\ uniformly in $(0,1)$ distributed
random variables: 
\begin{proposition}[{{\cite[Section\,6.4]{David2003}}}]
\label{pro:spacings-dirichlet-general-dimension}
~\\
	Let $\vect\alpha \in \N^d$ be a vector of positive integers and set $k \ce -1 +
	\sum_{i=1}^d \alpha_i$. Further let $V_1,\ldots,V_{k}$ be $k$ random variables
	i.\,i.\,d.\ uniformly in $(0,1)$ distributed.
	Denote by \smash{$V_{(1)}\le \cdots \le V_{(k)}$} their corresponding
	order statistics.
	We select some of the order statistics according to $\vect \alpha$: 
	For $j=1,\ldots,d-1$ define \smash{$W_j \ce V_{(p_j)}$}, where $p_j \ce
	\sum_{i=1}^j \alpha_i$. Additionally, we set $W_0 \ce 0$ and $W_d \ce 1$.
	
	Then, the \textit{consecutive distances} (or \textit{spacings}) $D_j \ce W_j -
	W_{j-1}$ for $j=1,\ldots,d$ induced by the selected
	order statistics $W_1,\ldots,W_{d-1}$ are Dirichlet
	distributed with parameter $\vect \alpha$: 
	\begin{align*}
			(D_1,\ldots,D_d) 
		&\wwrel\eqdist 
			\dirichlet(\alpha_1,\ldots,\alpha_d) \;.
	\end{align*}%
\qed\end{proposition}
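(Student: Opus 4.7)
The plan is a direct density computation: derive the joint density of the selected order statistics $W_1,\ldots,W_{d-1}$, change variables to the spacings, and recognize the resulting density as Dirichlet.

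First I would recall (or quickly re-derive) the classical formula for the joint density of several order statistics from an i.i.d.\ uniform sample. For indices $1\le i_1<\cdots<i_r\le k$, the joint density of $(V_{(i_1)},\ldots,V_{(i_r)})$ at $u_1<\cdots<u_r$ in $(0,1)$ is obtained by integrating the $k!$-density of the full ordered sample against the Dirichlet-like weights in the ``gaps'':
\begin{equation*}
    \frac{k!}{(i_1-1)!\,(i_2-i_1-1)!\cdots(k-i_r)!}\, u_1^{i_1-1}(u_2-u_1)^{i_2-i_1-1}\cdots(1-u_r)^{k-i_r}.
\end{equation*}
A quick derivation uses that, conditional on the values of the selected order statistics, the remaining $k-r$ uniforms split i.i.d.\ into the $r+1$ intervals with multinomial counts, which produces exactly the displayed product.

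Next I specialize to $r=d-1$ and $i_j = p_j = \alpha_1 + \cdots + \alpha_j$. Then $i_1 - 1 = \alpha_1 - 1$, $i_{j}-i_{j-1}-1 = \alpha_j - 1$ for $2\le j\le d-1$, and, using $k+1 = \sum_{i=1}^d \alpha_i$, the last exponent becomes $k - i_{d-1} = \alpha_d - 1$. So the joint density of $(W_1,\ldots,W_{d-1})$ simplifies to
\begin{equation*}
    \frac{k!}{\prod_{i=1}^d (\alpha_i-1)!}\, w_1^{\alpha_1-1}(w_2-w_1)^{\alpha_2-1}\cdots(1-w_{d-1})^{\alpha_d-1}
\end{equation*}
on $\{0<w_1<\cdots<w_{d-1}<1\}$.

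Now I apply the change of variables $d_j = w_j - w_{j-1}$ for $j=1,\ldots,d-1$ (with $w_0=0$). This linear transformation from $(w_1,\ldots,w_{d-1})$ to $(d_1,\ldots,d_{d-1})$ has a triangular matrix with ones on the diagonal, hence Jacobian $1$. The remaining spacing $d_d = 1-w_{d-1} = 1 - d_1 - \cdots - d_{d-1}$ is determined by the constraint, and the factor $(1-w_{d-1})^{\alpha_d-1}$ becomes $d_d^{\alpha_d-1}$. The admissible region maps bijectively onto $\Delta_d$. Thus $(D_1,\ldots,D_{d})$ (viewed as a $(d-1)$-dimensional r.v.\ supported on $\Delta_d$) has density proportional to $d_1^{\alpha_1-1}\cdots d_d^{\alpha_d-1}$. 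Finally, using $k!=(\sum_i\alpha_i-1)!=\Gamma(\sum_i\alpha_i)$ and $(\alpha_i-1)! = \Gamma(\alpha_i)$, the prefactor equals $\Gamma(\sum_i\alpha_i)/\prod_i\Gamma(\alpha_i) = 1/\BetaFun(\vect\alpha)$ by \eqref{eq:beta-function-via-gamma}. Comparing with \eqref{eq:def-dirichlet-density} gives $(D_1,\ldots,D_d)\eqdist\dirichlet(\vect\alpha)$.

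The only place that requires care is the index bookkeeping at the two boundaries ($w_0=0$ and $w_d=1$ are deterministic, yet must produce the exponents $\alpha_1-1$ and $\alpha_d-1$ respectively); I expect this to be the main, though still routine, obstacle. Everything else is a determinant-one change of variables and matching of normalizing constants.
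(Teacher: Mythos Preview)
Your argument is correct; the density computation, index bookkeeping, and change of variables are all right, and the normalizing constant matches $1/\BetaFun(\vect\alpha)$ as you observe.

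Note, however, that the paper does not actually prove this proposition: it is quoted from \cite[Section\,6.4]{David2003} and closed with a bare \qed. So there is no ``paper's own proof'' to compare against. What you have written is precisely the standard textbook derivation (and is essentially what one finds in David's book), so in that sense your approach coincides with the intended reference.
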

\smallskip

In the computations of expected partitioning costs, mixed moments of Dirichlet
distributed variables show up, which can be dealt with using the following general
statement for $f\equiv1$.

\begin{lemma}[``Powers-to-Parameters'']
\label{lem:dirichlet-powers-to-parameters}
~\\
	Let $\vect X = (X_1,\ldots,X_d) \in \R^d$ be a $\dirichlet(\vect\alpha)$
	distributed random variable with parameter $\vect\alpha = (\alpha_1,\ldots,\alpha_d)$.
	Let further $\vect m = (m_1,\ldots,m_d) \in \Z^d$
	be an integer vector with $\vect m > -\vect\alpha$ (componentwise) and abbreviate
	the sums $A \ce \sum_{i=1}^d \alpha_i$ and $M \ce \sum_{i=1}^d m_i$. 
	Then we	have for an arbitrary (real-valued) function $f : \Delta_{d} \to \R$ the identity
	\begin{align*}
			\E\bigl[ X_1^{m_1} \cdots X_d^{m_d} \cdot f(\vect X) \bigr]
		&\wwrel=
			\frac{\alpha_1^{\overline{m_1}} \cdots \alpha_d^{\overline{m_d}}}
				{A^{\overline M}} 
			\cdot \E\bigl[ f(\vect{\tilde X}) \bigr]
			\,,
	\end{align*}
	where $\vect{\tilde X} = (\tilde X_1,\ldots,\tilde X_d)$ is
	$\dirichlet(\vect\alpha+\vect m)$ distributed.
\end{lemma}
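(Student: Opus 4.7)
The plan is to prove the identity by a direct calculation on the density, absorbing the monomial $\prod_i x_i^{m_i}$ into the exponents of a new Dirichlet density and then tidying the normalisation constants. This is essentially a ``shift the parameters'' trick that works because the Dirichlet density is a monomial in $\vect x$ on the simplex; the condition $\vect m > -\vect\alpha$ is exactly what makes the shifted parameters $\vect\alpha+\vect m$ legal.

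First I would write the expectation as a Lebesgue integral over $\Delta_d$ using the definition \wref{eq:def-dirichlet-density}:
\begin{align*}
	&\E\bigl[X_1^{m_1}\cdots X_d^{m_d} f(\vect X)\bigr]
\\	&\wwrel=
	\frac{1}{\BetaFun(\vect\alpha)}
	\int_{\Delta_d}\!\! f(\vect x)\prod_{i=1}^d x_i^{\alpha_i+m_i-1}\,\mu(d\vect x).
\end{align*}
Since $\alpha_i+m_i>0$ for all $i$, the product $\prod_i x_i^{\alpha_i+m_i-1}$ is (up to a normalising constant) the density of a $\dirichlet(\vect\alpha+\vect m)$ distribution. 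Multiplying and dividing by $\BetaFun(\vect\alpha+\vect m)$ therefore turns the integral into $\BetaFun(\vect\alpha+\vect m)\cdot\E[f(\vect{\tilde X})]$ with $\vect{\tilde X}\eqdist\dirichlet(\vect\alpha+\vect m)$, so we obtain the clean intermediate form
\begin{align*}
	\E\bigl[X_1^{m_1}\cdots X_d^{m_d} f(\vect X)\bigr]
	\wwrel=
	\frac{\BetaFun(\vect\alpha+\vect m)}{\BetaFun(\vect\alpha)}\,\E\bigl[f(\vect{\tilde X})\bigr].
\end{align*}

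It remains to identify the Beta-ratio with $\prod_i\alpha_i^{\overline{m_i}}/A^{\overline M}$. For this I invoke \wref{eq:beta-function-via-gamma}: both $\BetaFun(\vect\alpha)$ and $\BetaFun(\vect\alpha+\vect m)$ unfold into products of $\Gamma$-values, and in the quotient each factor collapses via $\Gamma(\alpha_i+m_i)/\Gamma(\alpha_i)=\alpha_i^{\overline{m_i}}$ (which is the standard definition of the rising factorial extended to integer $m_i$, valid here because no pole of $\Gamma$ is crossed, again thanks to $\vect m>-\vect\alpha$). The same identity applied to the denominator gives $\Gamma(A+M)/\Gamma(A)=A^{\overline M}$, and these combine to the claimed ratio.

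There isn't really a hard step here; the only thing to watch is the handling of possibly negative components of $\vect m$. I would make this explicit by noting upfront that we use $x^{\overline n}:=\Gamma(x+n)/\Gamma(x)$ as the defining convention for rising factorial powers, so the identity $\BetaFun(\vect\alpha+\vect m)/\BetaFun(\vect\alpha) = \prod_i\alpha_i^{\overline{m_i}}/A^{\overline M}$ holds in one line regardless of the signs of the $m_i$, provided $\alpha_i+m_i>0$. After that the result is immediate.
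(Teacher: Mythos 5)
Your proof is correct and follows exactly the same route as the paper: express the expectation as an integral against the Dirichlet density, absorb the monomial into the exponents, multiply and divide by $\BetaFun(\vect\alpha+\vect m)$ to recognise a $\dirichlet(\vect\alpha+\vect m)$ expectation, and rewrite the Beta-ratio via the Gamma-function formula~\wref{eq:beta-function-via-gamma}. In fact your closing remark about defining $x^{\overline n}\ce\Gamma(x+n)/\Gamma(x)$ so that the identity holds for all integer $m_i$ with $\alpha_i+m_i>0$ is a touch more careful than the paper, which only cites the rising-factorial identity for $n\in\N$ even though the lemma permits negative components of $\vect m$.
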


\begin{proof}
Using $\frac{\Gamma(z+n)}{\Gamma(z)} = z^{\overline n}$ for all
$z\in\R_{>0}$ and $n \in \N$, we compute
\begin{align*}
		\E\bigl[ X_1^{m_1} \cdots X_d^{m_d} \cdot f(\vect X) \bigr]
	\mkern-150mu\\
	&\wwrel=
		\int_{\Delta_d} 
			x_1^{m_1} \cdots x_d^{m_d} f(\vect x) \cdot
			\frac{x_1^{\alpha_1-1} \cdots x_d^{\alpha_d-1}}{\BetaFun(\vect\alpha)}
		\; \mu(d\vect x)
	\\	&\wwrel=
		\frac{\BetaFun(\vect\alpha+\vect m)}{\BetaFun(\vect\alpha)}
		\cdot
		\int_{\Delta_d} 
		f(\vect x) \cdot
	\\*&\wwrel\ppe\qquad\qquad
		\frac{x_1^{\alpha_1+m_1-1} \cdots x_d^{\alpha_d+m_d-1}}
			{\BetaFun(\vect\alpha+\vect m)}
		\; \mu(d\vect x)
	\\	&\wwrel=
		\frac{\BetaFun(\alpha_1 + m_1,\ldots,\alpha_d + m_d)}
			{\BetaFun(\alpha_1,\ldots,\alpha_d)}
		\cdot \E\bigl[ f(\vect{\tilde X}) \bigr]
	\\	&\wwrel{\eqwithref{eq:beta-function-via-gamma}}
		\frac{ \alpha_1^{\overline{m_1}} \cdots \alpha_d^{\overline{m_d}} }
			{ A^{\overline M} } 
			\cdot \E\bigl[ f(\vect{\tilde X}) \bigr]
		\;.
\end{align*}
\end{proof}

\begin{theorem}
\label{thm:dirichlet-characterization-gamma}
~\\
{\sc(``Characterization via Gamma'' \citep[Thm~4.1]{Devroye1986})}\\[.5ex]
	Let $\vect X=(X_1,\ldots,X_d)$ be $\dirichlet(\vect\alpha)$ distributed with
	$\vect\alpha\in\R_{>0}^d$ and let $G_1,\ldots,G_d$ be $d$ independent
	\textsl{Gamma distributed} variables with parameters $\vect\alpha$, \ie,
	$G_i \eqdist \gammadist(\alpha_i)$.
	Further define $S = G_1+\cdots+G_d$. Then
	$$
		\vect X
		\wwrel{\eqdist}
		\biggl(
			\frac{G_1}S,\ldots,\frac{G_d}S
		\biggr) .
	$$
\qed\end{theorem}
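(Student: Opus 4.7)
The plan is to prove this by a standard change-of-variables argument applied to the joint density of the independent Gamma variables. First I would write down the joint density of $(G_1,\ldots,G_d)\in\R_{>0}^d$, which by independence factors as
\begin{align*}
    f_{\vect G}(g_1,\ldots,g_d)
    &\wwrel= \prod_{i=1}^d \frac{g_i^{\alpha_i-1} e^{-g_i}}{\Gamma(\alpha_i)}.
\end{align*}

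Next, I would introduce the transformation $\Phi:(g_1,\ldots,g_d)\mapsto(y_1,\ldots,y_{d-1},s)$ defined by $s \ce g_1+\cdots+g_d$ and $y_i \ce g_i/s$ for $i=1,\ldots,d-1$. Its inverse is $g_i = s y_i$ for $i<d$ and $g_d = s(1-y_1-\cdots-y_{d-1})$. A short computation (expand the Jacobian matrix along the last column, which is all ones, after pulling out factors of $s$ from the first $d-1$ rows) gives $|\det D\Phi^{-1}| = s^{d-1}$. Substituting and setting $y_d \ce 1-y_1-\cdots-y_{d-1}$, the joint density of $(Y_1,\ldots,Y_{d-1},S)$ becomes
\begin{align*}
    f_{\vect Y,S}(\vect y, s)
    &\wwrel= \prod_{i=1}^{d} \frac{(s y_i)^{\alpha_i-1}}{\Gamma(\alpha_i)}\, e^{-s}\, s^{d-1} \\
    &\wwrel= \underbrace{\frac{s^{A-1} e^{-s}}{\Gamma(A)}}_{\gammadist(A)\text{ density}}
        \,\cdot\, \underbrace{\frac{\Gamma(A)}{\prod_{i}\Gamma(\alpha_i)}\, \prod_{i=1}^{d} y_i^{\alpha_i-1}}_{\dirichlet(\vect\alpha)\text{ density}},
\end{align*}
where $A \ce \alpha_1+\cdots+\alpha_d$ and I used \wref{eq:beta-function-via-gamma} to recognize $1/\BetaFun(\vect\alpha) = \Gamma(A)/\prod_i\Gamma(\alpha_i)$. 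The support is $s>0$ and $(y_1,\ldots,y_d)\in\Delta_d$, exactly matching the product of supports.

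The factorization shows that $(Y_1,\ldots,Y_{d-1})$ is independent of $S$ and that $(Y_1,\ldots,Y_d)\eqdist\dirichlet(\vect\alpha)$, which is the claim (and as a bonus yields $S\eqdist\gammadist(A)$ and independence of $S$ from the ratios, a fact that is sometimes useful elsewhere). The main technical obstacle, if any, is purely bookkeeping: correctly computing the Jacobian determinant and keeping track of the constraint $y_d = 1-\sum_{i<d} y_i$ so that the exponent $\alpha_d-1$ attaches to the right factor. Everything else is algebraic simplification and the identity \wref{eq:beta-function-via-gamma}, which the paper has already recorded.
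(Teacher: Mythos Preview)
Your proof is correct and is the standard change-of-variables derivation. Note, however, that the paper does not actually prove this theorem: the statement is closed with a \verb|\qed| immediately after the display and is attributed to \citet[Thm~4.1]{Devroye1986}. So there is no ``paper's own proof'' to compare against; what you have written is essentially the argument one finds in the cited reference. The Jacobian computation and the factorization into a $\gammadist(A)$ density times the $\dirichlet(\vect\alpha)$ density are both right, and the side observation that $S$ is independent of the ratio vector is a genuine bonus (it is precisely what makes \wref{lem:dirichlet-aggregation} and \wref{lem:dirichlet-zoom} work cleanly).
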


\begin{lemma}[``Aggregation'']
\label{lem:dirichlet-aggregation}
~\\
	Let $\vect X = (X_1,\ldots,X_d) \in \R^d$ be a $\dirichlet(\vect\alpha)$
	distributed random variable with parameter $\vect\alpha = (\alpha_1,\ldots,\alpha_d)$.
	For two components $1\le i <j\le d$ define the ``aggregated vector''
	$
		\vect{\hat X} \wrel= (\hat X_1,\ldots,\hat X_{d-1})
	$
	with
	$$
		\hat X_l 
		\wwrel=
		\begin{cases}
			X_l, 		& \text{if } l < j \wedge l \ne i; \\
			X_i + X_j,	& \text{if } l = i; \\
			X_{l+1},	& \text{if } l \ge j,
		\end{cases}
	$$
	i.\,e., the $i$th and $j$th components have been added up.
	Then, $\vect{\hat X}$ is $\dirichlet(\vect{\hat\alpha})$ distributed with
	the aggregated parameter vector
	$\vect{\hat\alpha} = (\hat\alpha_1,\ldots,\hat\alpha_{d-1})$ where
	$$
		\hat \alpha_l 
		\wwrel=
		\begin{cases}
			\alpha_l, 		& \text{if } l < j \wedge l \ne i; \\
			\alpha_i + \alpha_j,	& \text{if } l = i; \\
			\alpha_{l+1},	& \text{if } l \ge j.
		\end{cases}
	$$
\end{lemma}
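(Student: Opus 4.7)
The plan is to prove the Aggregation Lemma by routing through the Gamma characterization (\wref{thm:dirichlet-characterization-gamma}), which reduces aggregation of Dirichlet components to the classical additivity of independent Gamma variables. A direct density-based proof via change of variables would work too but would force us to integrate out $X_j$ under the simplex constraint, which is notationally awkward; the Gamma route is cleaner.

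First, I invoke \wref{thm:dirichlet-characterization-gamma} to represent $\vect X \eqdist (G_1/S,\ldots,G_d/S)$ where $G_1,\ldots,G_d$ are independent with $G_l \eqdist \gammadist(\alpha_l)$ and $S = G_1+\cdots+G_d$. Then I define the aggregated variables $\hat G_l$ by copying the $G_l$ for $l<j,\,l\ne i$, setting $\hat G_i = G_i + G_j$, and shifting $\hat G_l = G_{l+1}$ for $l\ge j$. The crucial observation is then the identity
\begin{align*}
    \sum_{l=1}^{d-1} \hat G_l \wwrel= \sum_{l=1}^d G_l \wwrel= S ,
\end{align*}
so the normalizing denominator is unchanged, and componentwise $\hat G_l / S = \hat X_l$ matches the definition of $\vect{\hat X}$ exactly.

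The step that carries the real content is showing that $\hat G_1,\ldots,\hat G_{d-1}$ are independent and that $\hat G_i \eqdist \gammadist(\alpha_i + \alpha_j)$. Independence of the $d-1$ aggregated coordinates follows from the fact that independence of a family $(G_1,\ldots,G_d)$ is preserved under applying (measurable) functions to \emph{disjoint} sub-blocks; here each $\hat G_l$ is a function of a disjoint subset of the $G$'s. The distributional claim $G_i + G_j \eqdist \gammadist(\alpha_i+\alpha_j)$ is the classical Gamma-convolution identity, which can be verified in a single line using moment-generating functions: $\E[e^{t(G_i+G_j)}] = (1-t)^{-\alpha_i}(1-t)^{-\alpha_j} = (1-t)^{-(\alpha_i+\alpha_j)}$ for $t<1$.

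With these ingredients in place, I apply \wref{thm:dirichlet-characterization-gamma} in the reverse direction to the family $(\hat G_1,\ldots,\hat G_{d-1})$ of independent Gammas with parameters $\hat\alpha_l$: their normalized vector $(\hat G_1/\hat S,\ldots,\hat G_{d-1}/\hat S)$, where $\hat S = \sum_l \hat G_l = S$, is $\dirichlet(\vect{\hat\alpha})$ distributed. Since this normalized vector coincides with $\vect{\hat X}$, the claim follows. The main obstacle is purely bookkeeping\,---\,ensuring that the indexing shift for $l\ge j$ lines up between $\vect{\hat X}$ and the aggregated Gamma family, and that independence is argued from a disjoint-block decomposition rather than an ad-hoc coordinate count.
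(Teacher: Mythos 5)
Your proof is correct and follows exactly the route the paper takes: pass to the Gamma representation via \wref{thm:dirichlet-characterization-gamma}, aggregate the Gamma coordinates, and apply the same theorem in reverse. The only difference is that you spell out the independence-over-disjoint-blocks argument and the moment-generating-function verification of $G_i+G_j \eqdist \gammadist(\alpha_i+\alpha_j)$, which the paper leaves implicit.
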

\begin{proof}
Using \wref{thm:dirichlet-characterization-gamma}, we have
$\vect X \eqdist \vect G / S$ for $\vect G = (G_1,\ldots,G_d)$ with
$G_i \eqdist \gammadist(\alpha_i)$ for $i=1,\ldots,d$ and $S=G_1+\cdots+G_d$.
Defining the aggregated Gamma vector 
$
	\vect{\hat G} \wrel= (\hat G_1,\ldots,\hat G_{d-1})
$
analogously to $\vect{\hat X}$ via
$$
	\hat G_l 
	\wwrel=
	\begin{cases}
		G_l, 		& \text{if } l < j \wedge l \ne i; \\
		G_i + G_j,	& \text{if } l = i; \\
		G_{l+1},	& \text{if } l \ge j,
	\end{cases}
$$
we get again by \wref{thm:dirichlet-characterization-gamma} that
$\vect{\hat X} = \vect{\hat G} / S$ has a $\dirichlet(\vect{\hat{\alpha}})$ distribution,
as claimed.
\end{proof}

\begin{lemma}[``Zoom'']
\label{lem:dirichlet-zoom}
	Let $\vect X = (X_1,\ldots,X_d) \in \R^d$ be a $\dirichlet(\vect\alpha)$
	distributed random variable with parameter $\vect\alpha = (\alpha_1,\ldots,\alpha_d)$.
	For two components $1\le i <j\le d$ define the ``zoom variable''
	$X^\circ = \frac{X_i}{X_i+X_j}$.
	Then, $(X^\circ,1-X^\circ)$ is $\dirichlet(\alpha_i,\alpha_j)$ distributed.
\end{lemma}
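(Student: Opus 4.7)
The plan is to leverage the Gamma characterization (\wref{thm:dirichlet-characterization-gamma}) exactly as was done in the proof of the aggregation lemma, since the zoom operation is naturally a ratio of Gammas. First I would write $\vect X \eqdist \vect G/S$ with $G_1,\ldots,G_d$ independent $\gammadist(\alpha_i)$ variables and $S = G_1+\cdots+G_d$. Then $X^\circ = X_i/(X_i+X_j) = G_i/(G_i+G_j)$, so the scaling by $S$ cancels and $X^\circ$ depends only on the two independent Gammas $G_i$ and $G_j$.

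At this point the claim reduces to the well-known fact that if $G_i \eqdist \gammadist(\alpha_i)$ and $G_j \eqdist \gammadist(\alpha_j)$ are independent, then $(G_i/(G_i+G_j),\, G_j/(G_i+G_j)) \eqdist \dirichlet(\alpha_i,\alpha_j)$. This is itself a direct application of \wref{thm:dirichlet-characterization-gamma} in dimension $d=2$: take the two-component Gamma vector $(G_i,G_j)$ (which remains an independent Gamma pair with the correct shape parameters after discarding the remaining coordinates), normalize by its own sum, and obtain a two-dimensional Dirichlet variable.

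The only subtle point is justifying that one may ``forget'' the other Gammas and that the resulting normalization still yields a Dirichlet distribution. This is immediate because $(G_i,G_j)$ are independent of one another (as part of a mutually independent family) and their joint distribution matches the hypotheses of \wref{thm:dirichlet-characterization-gamma} for the parameter vector $(\alpha_i,\alpha_j)$; the extra denominator $S$ that would appear in the full Dirichlet representation cancels in the ratio $G_i/(G_i+G_j)$, so only the sum $G_i+G_j$ matters. Hence $(X^\circ,1-X^\circ) \eqdist \dirichlet(\alpha_i,\alpha_j)$, as claimed.

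I expect no real obstacle beyond this bookkeeping; the argument is essentially a one-line reduction to the two-dimensional case of the Gamma-to-Dirichlet theorem, mirroring the pattern of the proof of \wref{lem:dirichlet-aggregation}. Alternatively, one could prove the statement by a direct change of variables in the density \wref{eq:def-dirichlet-density}: marginalize over the $d-2$ coordinates other than $X_i$ and $X_j$ (using \wref{lem:dirichlet-aggregation} to collapse them into a single component), then apply a standard Jacobian computation for the map $(X_i,X_j)\mapsto(X_i+X_j,\,X_i/(X_i+X_j))$, which is known to decouple into independent Beta and Gamma factors. Either route yields the result, but the Gamma-based proof is the shorter and cleaner one.
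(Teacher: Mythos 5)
Your proof is correct and follows exactly the same route as the paper's: invoke the Gamma characterization (\wref{thm:dirichlet-characterization-gamma}) to write $\vect X \eqdist \vect G/S$, note that the common denominator $S$ cancels in the ratio so $X^\circ = G_i/(G_i+G_j)$, and apply the Gamma characterization a second time to the independent pair $(G_i,G_j)$. The extra discussion of the density/Jacobian alternative is fine but unnecessary; the Gamma argument you lead with is precisely the paper's proof.
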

\begin{proof}
	As in the proof of \wref{lem:dirichlet-aggregation}, we use
	\wref{thm:dirichlet-characterization-gamma} to write
	$\vect X \eqdist \vect G / S$ for $\vect G = (G_1,\ldots,G_d)$ with
	$G_i \eqdist \gammadist(\alpha_i)$ for $i=1,\ldots,d$ and $S=G_1+\cdots+G_d$.
	Then, $X^\circ \eqdist G_i / (G_i+G_j)$ and 
	by applying \wref{thm:dirichlet-characterization-gamma} once more,
	$X^\circ \eqdist \dirichlet(\alpha_i,\alpha_j)$.
\end{proof}

\begin{lemma}[``Geometric Beta integrals'']
\label{lem:geometric-betas}
~\\
	\def\adiv{\left\lfloor\!\frac{a-b}3\!\right\rfloor}
	\def\adivf{\left\lfloor\!\frac{a-b}4\!\right\rfloor}
	Let $a\ge b > 0$ be two integer constants.
	Then with $c=1$ or $c=2$, the integral
	$$
			\int_0^1 \frac{x^a(1-x)^b}{\frac1c-x(1-x)} \,dx
	$$
	has the value given in \wpref{fig:geometric-beta-integrals}.
\end{lemma}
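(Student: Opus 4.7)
The plan is to set $I_{a,b}^{(c)} = \int_0^1 \frac{x^a(1-x)^b}{\frac1c - x(1-x)}\,dx$ and derive two linear recurrences that together reduce the integral to explicitly computable base cases. The first recurrence lowers both parameters symmetrically using the identity $x(1-x) = \frac{1}{c} - (\frac{1}{c} - x(1-x))$. After factoring $x(1-x)$ out of the numerator this gives
\begin{equation*}
\frac{x^a(1-x)^b}{\frac{1}{c} - x(1-x)} \;=\; \frac{1}{c}\cdot\frac{x^{a-1}(1-x)^{b-1}}{\frac{1}{c} - x(1-x)} \;-\; x^{a-1}(1-x)^{b-1}.
\end{equation*}
Integrating yields $I_{a,b}^{(c)} = \tfrac1c I_{a-1,b-1}^{(c)} - \BetaFun(a,b)$, and iterating $b$ times produces $I_{a,b}^{(c)} = c^{-b} I_{d,0}^{(c)} - \sum_{i=0}^{b-1} c^{-i}\BetaFun(a-i,b-i)$ with $d = a-b$. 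This already accounts for the initial $-\sum_{i=0}^{b-1}\BetaFun(\cdot,\cdot)$ term (together with the factor $2^{-i}$ that only appears in the $c=2$ formula) in \wref{fig:geometric-beta-integrals}.

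Next I will reduce $I_{d,0}^{(c)}$ via $x^2 = x - x(1-x)$, which after the same manipulation produces the inhomogeneous second-order recurrence
\begin{equation*}
I_{d,0}^{(c)} \;=\; I_{d-1,0}^{(c)} - \tfrac1c I_{d-2,0}^{(c)} + \tfrac{1}{d-1}, \qquad d \ge 2.
\end{equation*}
This is precisely what comes out of polynomial long division of $x^d$ by the denominator. Its characteristic polynomial $r^2 - r + 1/c = 0$ has complex-conjugate roots that are primitive sixth roots of unity for $c=1$ and (scaled by $1/\sqrt 2$) primitive eighth roots of unity for $c=2$; in particular each homogeneous solution has modulus $c^{-1/2}$ per step, so pairs (resp.\ triples) of consecutive Green's-function values from consecutive periods acquire the factors $(-1)$ (resp.\ $(-1/4)$). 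This period-$3$ (resp.\ period-$4$) structure is exactly what yields the summation ranges $\lfloor (a-b)/3 \rfloor$ and $\lfloor (a-b)/4 \rfloor$, the alternating signs $(-1)^{i-1}$ and $(-1/4)^{i-1}$, and the modular case distinction inside $\rho_1$ and $\rho_2$.

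To extract the closed form I will convolve the Green's function of the homogeneous recurrence with the forcing $1/(d-1)$ and aggregate the contributions of each full period into a single summand, which collapses to $\frac{1}{(a-b)-3i+2}+\frac{1}{(a-b)-3i+1}$ for $c=1$ and the three-term block in $c=2$. What remains is a linear combination of the two base values $I_{0,0}^{(c)}$ and $I_{1,0}^{(c)}$ whose coefficients depend only on $d \bmod 3$ (resp.\ $d \bmod 4$). These base values I compute directly by completing the square: for $c=1$ one obtains $I_{0,0}^{(1)} = \int_0^1 \frac{dx}{(x-1/2)^2 + 3/4} = \frac{2\pi}{3\sqrt 3}$ and, by symmetry of the integrand after the shift $u = x - 1/2$, $I_{1,0}^{(1)} = \frac{\pi}{3\sqrt 3}$ (so that $I_{2,0}^{(1)} = 1 - \frac{\pi}{3\sqrt 3}$ by the recurrence); for $c=2$ the analogous $\arctan$-integration gives $I_{0,0}^{(2)} = \pi$ and $I_{1,0}^{(2)} = \pi/2$. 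These are exactly the constants appearing in the definitions of $\rho_1$ and $\rho_2$.

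The hardest part I expect is the bookkeeping in the last step: after the periodic blocks of Green's-function coefficients have been packaged into the explicit sum, one must verify case by case for $d \bmod 3$ (resp.\ $d \bmod 4$) that the leftover low-order terms combine with the base values $I_{0,0}^{(c)}$ and $I_{1,0}^{(c)}$ to produce exactly the three (resp.\ four) piecewise branches of $\rho_c(d)$ stated in the lemma. This is a routine but finicky induction on $d$, which is essentially why the final answer is most naturally expressed via a modular case distinction rather than a single uniform closed form.
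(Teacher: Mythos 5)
Your plan is correct and essentially mirrors the paper's proof: both first iterate the recurrence $I_{a,b}^{(c)} = \tfrac1c I_{a-1,b-1}^{(c)} - \BetaFun(a,b)$ obtained by factoring out $x(1-x)$, then reduce the remaining one-parameter integral by polynomial long division, and evaluate the $\arctan$ base cases by completing the square. Your only cosmetic departure is that you phrase the degree reduction as the second-order inhomogeneous recurrence $I_{d,0}^{(c)} = I_{d-1,0}^{(c)} - \tfrac1c I_{d-2,0}^{(c)} + \tfrac1{d-1}$ and derive the period-$3$/period-$4$ structure from its characteristic roots, whereas the paper writes down the corresponding third-/fourth-order telescoping identity ($\integrand[1]{a+3}{b}+\integrand[1]ab=(x+1)x^a(1-x)^b$, resp.\ the analogous $c=2$ relation) directly.
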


We require $a\ge b$ just for convenience of notation; 
the integrals are symmetric in $a$ and $b$.

\begin{proof}
	\newcommand\integrand[3][1]{\ui{I_{#2,#3}}{#1}}
	In both cases, we find recurrences for the polynomial long division of the integrand where
	we first decrement $a$ and $b$ simultaneously by $1$ until $b$ becomes $0$.
	Then, we decrement $a$ by $3$ (for $c=1$) resp.\ by $4$ (for $c=2$).
	More precisely with 
	$\integrand[c]ab \ce \frac{x^a(1-x)^b}{\frac1c-x(1-x)}$, we have for $a\ge b\ge 0$:
	\begin{align*}
			\integrand[1]{a+1}{b+1} - \integrand[1]{a}{b}
		&\wwrel=
			-x^a(1-x)^b \text{ and}
	\\
			\integrand[1]{a+3}{b} + \integrand[1]ab
		&\wwrel=
			(x+1)x^a(1-x)^b .
	\end{align*}
	Together with the base cases
	\begin{align*}
		\integrand[1]00 &\wwrel= \frac 1{1-x(1-x)}, \\
		\integrand[1]10 &\wwrel= \frac x{1-x(1-x)}    \quad\text{and}\\
		\integrand[1]20 &\wwrel= 1 + \integrand[1]10 - \integrand[1]00 
	\end{align*} 
	we get a telescoping recurrence for the integrand $\integrand[1]ab$ 
	that allows to write the fraction as a polynomial in $x$ plus a 
	multiple of one of the base cases
	($\integrand[1]d0$ for $d=(a-b) \bmod 3$ to be precise).
	The advantage of this representation is that it can easily be integrated term-wise:
	The polynomials trivially and the base cases using suitable linear combinations of
	the anti-derivatives
	\begin{align*}
			\int \frac1{1-x(1-x)} \,dx 
		&\wwrel= 
			\frac2{\sqrt3} \arctan\biggl( \frac{2}{\sqrt3}x-\frac1{\sqrt3} \biggr)
	\\
			\int \frac{2x-1}{1-x(1-x)} \,dx 
		&\wwrel= 
			\ln\bigl( 1-x(1-x) \bigr)
	\end{align*}
	We thus obtain a recurrence for the integrals, which again directly telescopes.
	Simplifying the involved Beta function terms gives the
	``closed form'' shown in \wref{fig:geometric-beta-integrals}.
	(For the sums involved in this term, we could not find more succinct representations.)
	This finishes the proof of the first part ($c=1$).
	
	For $c=2$, the recurrence is only slightly different:
	\begin{align*}
			\integrand[2]{a+1}{b+1} - \tfrac12\integrand[2]{a}{b}
		&\wwrel=
			-x^a(1-x)^b \text{ and}
	\\
			\integrand[2]{a+4}{b} + \tfrac14\integrand[2]ab
		&\wwrel=
			(\tfrac12 + x(x+1))x^a(1-x)^b .
	\end{align*}
	with the base cases
	\begin{align*}
		\integrand[2]00 &\wwrel= \frac 1{\tfrac12-x(1-x)}, \\
		\integrand[2]10 &\wwrel= \frac x{\tfrac12-x(1-x)}    \\
		\integrand[2]20 &\wwrel= 1 + \integrand[2]10 - \tfrac12 \integrand[2]00  \quad\text{and}\\
		\integrand[2]30 &\wwrel= x+1 + \tfrac12\integrand[2]10 - \tfrac12\integrand[2]00  .
	\end{align*}
	Term-wise integration gives again a telescoping recurrence for the geometric beta 
	integral with $c=2$, where we use the anti-derivatives
	\begin{align*}
			\int \frac1{\frac12-x(1-x)} \,dx 
		&\wwrel= 
			-2\arctan( 1 - 2x )
	\\
			\int \frac{2x-1}{\frac12-x(1-x)} \,dx 
		&\wwrel= 
			\ln\bigl( 1 - 2x(1-x) \bigr)
	\end{align*}
	to integrate the base cases.
	The remaining steps are as for the $c=1$ case.
\end{proof}

\section{Detailed Computations of Expected Toll Functions}
\label{app:computations}

In this appendix, we give detailed computations of the expected
number of branch misses in the first partitioning steps of CQS and YQS, 
respectively,
which means in particular to derive the values for 
$a_{\mathrm{CQS}}$ and $a_{\mathrm{YQS}}$ given in \wref{thm:main-results-BM}.

The computations heavily use the lemmas about Dirichlet distributed
random variables from \wref{app:distributions}, and
the following notation is excessively used:
We write $\dirichletExpectation{f(\vect X)}{\vect\alpha}$
for the expectation of any term 
$f(\vect X)$ that depends on $\vect X$
where $\vect X \eqdist \dirichlet(\vect \alpha)$, \ie,
formally for $\vect\alpha \in \R_{>0}^d$
$$
		\dirichletExpectation{f(\vect X)}{\vect\alpha} 
	\wwrel= 
		\int_{\Delta_d} \mkern-10mu f(\vect x) 
			\frac{x_1^{\alpha_1-1}\cdots x_d^{\alpha_d-1}}{\BetaFun(\vect\alpha)}
		\mu(d\vect x)
	\,,
$$
where $\Delta_d$ is the standard $(d-1)$-dimensional simplex, see \wref{eq:def-delta-d}.
Here, $\vect X$ is to be understood as a formal parameter, \ie, 
a local, bound variable whose distribution potentially differs between two 
$\dirichletExpectation{\circ}{\circ}$ terms.
We can express the abbreviation terms $\geoDirichletExp[c]ab$ 
(see \wildtpageref[eq.]{eq:def-geo-dirichlet-exp}{\eqref})
concisely as Dirichlet expectation:
\begin{align}
\label{eq:geo-dirichlet-exp-via-dirichlet-expectation}
		\geoDirichletExp[c]ab
	&\wwrel{=}
		\dirichletExpectation{\frac{X_1 X_2}{\frac1c - X_1 X_2}}{a,b} \;.
\end{align}

Moreover, we abbreviate $\tp_i = t_i + 1$ and $\kp = k+1 = \tp_1+\cdots+\tp_s$, 
as those terms occur very often below, so that we need a more convenient
way to write them.

\todoin[disable]{%
	{\itshape Shall we retain the following `intuition'? 
	It is clear that it has something to do with BM, but it is no way near rigorous
	and not really helpful for the analysis \dots}
	
	\medskip\noindent
	\textbf{Remark:}
	There is some intuition behind the term
	$$
			\geoDirichletExp[2]{\alpha_1}{\alpha_2}
		\wwrel{=}
			\dirichletExpectation{\frac{2X_1 X_2}{1 - 2X_1 X_2}}{\alpha_1,\alpha_2} \;:
	$$
	It gives the expected odds for observing two \emph{different} outcomes in two independent repetitions
	of a Bernoulli trial with parameter $p=X_1$ against observing two times the \emph{same} outcome, 
	where the
	parameter $X_1$ is itself random (namely $\dirichlet(\alpha_1,\alpha_2)$ distributed).
	
	This coincides with the expected number of times the two Bernoulli trials show different outcomes
	before they first show the same outcome.
}

\subsection{Classic Quicksort}

As already discussed in the main text, for CQS the two comparison location behave 
symmetrically, which simplifies analysis.
More precisely, we have for $\steadystatemissrate{}$ the steady-state miss-rate function of the 
used branch prediction scheme
\begin{align*}
		\E[\toll [n] \branchmisses]
	&\wwrel=
		      \E[\ui{C_n}{c1} \steadystatemissrate{}(\btprob{c1}) ]
		\bin+ \E[\ui{C_n}{c2} \steadystatemissrate{}(\btprob{c2}) ]
\\&\wwrel=
		\bigl(
			  \E[D_1 \steadystatemissrate{}(D_1) ]
			+ \E[D_2 \steadystatemissrate{}(D_2) ]
		\bigr) n \bin+ \Oh(1)
\\&\wwrel{\relwithtext[r]{symmetry of $\steadystatemissrate{}$}=}
		\E[(D_1+D_2) \steadystatemissrate{}(D_1)] \,n \bin+ \Oh(1)
\\&\wwrel=
		\E[\steadystatemissrate{}(D_1)]\,n \bin+ \Oh(1)\;.
\end{align*}
The second equation uses the quantities given in \wref{tab:execution-freqs-and-branch-probs-cond-on-D}.
This gives a more analytical proof of 
\wildpageref[equation]{eq:toll-CQS-generic}{\eqref}.
With the abbreviation 
$$
		g_{x,y}
	\wwrel=
		\dirichletExpectation{\steadystatemissrate{}(X_1)}{x,y}
	\wwrel=
		\dirichletExpectation{\steadystatemissrate{}(X_2)}{x,y}
$$ 
we obtain $a_{\mathrm{CQS}} = g_{\tp_1,\tp_2}$ 
as claimed in \wref{thm:main-results-BM}\,---\,it only remains to show that, 
upon inserting the actual steady-state miss-rate functions
(collected in \wref{tab:steady-state-miss-rate-functions-in-q} for convenience),
$g_{x,y}$ has the following explicit terms:
\begin{enumerate}[label=(\roman*)]
	\item \mbox{\makeboxlike[l]{2-bit sc: }{1-bit:}}
	$
			g_{x,y}
		=
			2xy/\rf{(x+y)}2
	$,
	\item \mbox{\makeboxlike[l]{2-bit sc: }{2-bit sc:}}
	$
			g_{x,y}
		=
			\frac12 \geoDirichletExp[2]{x}{y}
	$,
	\item \mbox{\makeboxlike[l]{2-bit sc: }{2-bit fc:}}
	$
			g_{x,y}
		=
			\frac{2xy}{\rf{(x+y)}{2}}\geoDirichletExp[1]{x+1}{y+1}
			+\geoDirichletExp[1]{x}{y}
	$.
\end{enumerate}

\begin{table}
	\small
	\plaincenter{%
		\begin{tabular}{rc}
			\toprule
			     Scheme & Miss Rate ($q=p(1-p)$) \\
			\midrule
			  \bmonebit &          $2q$          \\[1ex]
			\bmtwobitsc &   $\dfrac{q}{1-2q}$   \\[2ex]
			\bmtwobitfc & $\dfrac{2q^2+q}{1-q}$  \\
			\bottomrule
		\end{tabular}
	}
	\caption{%
		The steady-state miss-rate functions for our 
		branch prediction schemes, given in terms of $q = p(1-p)$.
		This change of variable is possible since the functions are
		symmetric:
		$\protect\steadystatemissrate{\mathit{bps}}(p)=\protect\steadystatemissrate{\mathit{bps}}(1-p)$
		for all $p\in[0,1]$
	}
	\label{tab:steady-state-miss-rate-functions-in-q}
\end{table}

We compute
\begin{align*}
		\dirichletExpectation{\steadystatemissrate{\bmonebit} (X_1)}{x,y}
	&\wwrel=
		\dirichletExpectation{2 X_1 X_2}{x,y}
\\	&\wwrel{\eqwithref[r]{lem:dirichlet-powers-to-parameters}}
		\frac{2 x y}{\rf{(x+y)}2}
\end{align*}
for 1-bit prediction;
\begin{align*}
		\dirichletExpectation{\steadystatemissrate{\bmtwobitsc} (X_1)}{x,y}
	&\wwrel=
		\dirichletExpectation{\frac{X_1 X_2}{1-2X_1 X_2}}{x,y}
\\	&\wwrel=
		\frac12 \dirichletExpectation{\frac{X_1 X_2}{\frac12-X_1 X_2}}{x,y}
\\	&\wwrel{\eqwithref[r]{eq:geo-dirichlet-exp-via-dirichlet-expectation}} 
	\frac12 \geoDirichletExp[2]{x}{y}
\end{align*}
for the 2-bit sc predictor and finally with the 2-bit fc predictor
\begin{align*}
		\dirichletExpectation{\steadystatemissrate{\bmtwobitfc} (X_1)}{x,y}
	\mkern-50mu\\
	&\wwrel=
		\dirichletExpectation{\frac{2(X_1 X_2)^2 + X_1X_2}{1-X_1X_2}}{x,y}
\\	&\wwrel{\eqwithref[r]{lem:dirichlet-powers-to-parameters}}
		\frac{2 x y}{\rf{(x+y)}2} 
		\dirichletExpectation{\frac{X_1X_2}{1-X_1X_2}}{x+1,y+1}
\\* &\wwrel\ppe \qquad{}
		\bin+ \dirichletExpectation{\frac{X_1X_2}{1-X_1X_2}}{x,y}
\\	&\wwrel{\eqwithref[r]{eq:geo-dirichlet-exp-via-dirichlet-expectation}} 
		\frac{2 x y}{\rf{(x+y)}2} \geoDirichletExp[1]{x+1}{y+1}
		\bin{+}
		\geoDirichletExp[1]{x}{y}.
\end{align*}
Together with \wref{thm:leading-term-expectation-hennequin},
this proves the part of \wref{thm:main-results-BM} concerning CQS.

\begin{table}
	\small
	\plaincenter{%
	\begin{tabular}{lcc}
		\toprule
		Location $i$ & $\E[\ui {C_n}i\given \vect D] \big/ n$ &      $\btprob i$      \\
		\midrule
		$i=c1$       &                    $D_1$                    &         $D_1$         \\
		$i=c2$       &                    $D_2$                    &         $D_2$         \\
		$i=c$        &                     $1$                     &         $D_1$         \\
		\midrule
		$i=y1$       &                  $D_1+D_2$                  &       $D_2+D_3$       \\
		$i=y2$       &            $(D_1+D_2)(D_2+D_3)$             & $\frac{D_2}{D_2+D_3}$ \\
		$i=y3$       &                    $D_3$                    &       $D_1+D_2$       \\
		$i=y4$       &               $D_3(D_1+D_2)$                & $\frac{D_1}{D_1+D_2}$ \\
		\bottomrule
	\end{tabular}%
	}
	\caption{%
		Summary of the quantities used for the analysis of 
		branch misses in CQS ($i=c1,c2,c$) and YQS ($i=y1,y2,y3,y4$).
		For the execution frequency,
		the table lists only the leading term coefficients, \eg,
		$\E[\protect\ui{C_n}{c1}\given \vect D] = D_1 n + \Oh(1)$.
	}
	\label{tab:execution-freqs-and-branch-probs-cond-on-D}
\end{table}

\subsection{Yaroslavskiy's Quicksort}
In YQS, we have four instead of just two comparisons locations,
and here, no symmetries can be exploited to ease computations.
Nevertheless, essentially the same procedure as for CQS remains 
possible when using further properties of Dirichlet vectors.

We consider the four comparison locations of YQS separately 
to keep the presentation accessible.
For convenience, \wref{tab:execution-freqs-and-branch-probs-cond-on-D} 
collects the needed information about the comparison locations:
how often they are executed and with which probability the corresponding
branches are taken, both averaged over all choices for the ordinary elements,
but conditional on fixed $\vect D$, \ie, fixed pivot values.

\subsubsection[$C^{y1}$]{\boldmath $\ui C{y1}$.}
For the first comparison location, we get (using linearity of the
expectation and \wref{lem:dirichlet-powers-to-parameters})
\begin{align*}
		\E[\ui {C_n}{y1} \steadystatemissrate{}(\btprob{y1})] / n
	\mkern-100mu\\
	&\wwrel=
		\dirichletExpectation{(X_1+X_2) \cdot \steadystatemissrate{}(X_2+X_3)}{\vect\tp} 
		\bin+ \Oh(\tfrac1n)
\\	&\wwrel= 
		\phantom{{}\bin+{}}
		\frac{\tp_1}{\kp} \dirichletExpectation{\steadystatemissrate{}(X_2+X_3)}{\vect\tp+(1,0,0)}
\\* &\wwrel\ppe {}
		\bin+ \frac{\tp_2}{\kp} \dirichletExpectation{\steadystatemissrate{}(X_2+X_3)}{\vect\tp+(0,1,0)}
		\bin+ \Oh(\tfrac1n).
\end{align*}
Expectations of this form are easily dealt with using \textsl{aggregation} 
(\wref{lem:dirichlet-aggregation}):
\begin{align*}
		\dirichletExpectation{\steadystatemissrate{}(X_2+X_3)}{\vect\tp+(1,0,0)}
	&\wwrel=
		\dirichletExpectation{\steadystatemissrate{}(X_2)}{\tp_1+1,\tp_2+\tp_3}
\\	&\wwrel=
		g_{\tp_1+1,\tp_2+\tp_3},
\\[1ex]
		\dirichletExpectation{\steadystatemissrate{}(X_2+X_3)}{\vect\tp+(0,1,0)}
	&\wwrel=
		\dirichletExpectation{\steadystatemissrate{}(X_2)}{\tp_1,\tp_2+\tp_3+1}.
\\	&\wwrel=
		g_{\tp_1,\tp_2+\tp_3+1}.
\end{align*}
(Note that $\vect X$ is a three-dimensional vector on the left
hand sides and a two-dimensional one on the right hand sides.)

The total contribution of the first comparison location to the expected 
number of branch misses is then simply
\begin{multline*}
		\E[\ui {C_n}{y1} \steadystatemissrate{}(\btprob{y1})] 
	\wwrel=
	\\
		\Bigl(
			      \frac{\tp_1}{\kp} g_{\tp_1+1,\tp_2+\tp_3}
			\bin+ \frac{\tp_2}{\kp} g_{\tp_1,\tp_2+\tp_3+1}
		\Bigr) n \bin+ \Oh(1)
\end{multline*}

\subsubsection[$C^{y2}$]{\boldmath $\ui C{y2}$.}
For the second comparison location, the involved terms get a little messier. 
The leading term coefficient of 
$\E[\ui {C_n}{y2} \steadystatemissrate{}(\btprob{y2})]$ is
(cf.\ \wref{tab:execution-freqs-and-branch-probs-cond-on-D})
\begin{align*}
		\dirichletExpectation{(X_1+X_2)(X_2+X_3) \cdot
			\steadystatemissrate{}\biggl(\frac{X_2}{X_2+X_3}\biggr)}{\vect\tp}.
\end{align*}
After expanding $(X_1+X_2)(X_2+X_3)$ and splitting the 
expectation into summands,
we can use the \textsl{powers-to-parameters rule} 
(\wref{lem:dirichlet-powers-to-parameters})
to get four simpler terms, the first two of which are
\begin{multline*}
		\dirichletExpectation{X_1 X_2 \,
				\steadystatemissrate{}\bigl(\tfrac{X_2}{X_2+X_3}\bigr)}{\vect\tp}
	\wwrel=\\
		\tfrac{\tp_1\tp_2}{\rf\kp2}
			\dirichletExpectation{\steadystatemissrate{}\bigl(\tfrac{X_2}{X_2+X_3}\bigr)}
			{\vect\tp+(1,1,0)},
\end{multline*}
\vspace{-4ex}
\begin{multline*}
		\dirichletExpectation{X_2 X_2 \,
				\steadystatemissrate{}\bigl(\tfrac{X_2}{X_2+X_3}\bigr)}{\vect\tp}
	\wwrel=\\
		\tfrac{\rf{\tp_2}2}{\rf\kp2}
			\dirichletExpectation{\steadystatemissrate{}\bigl(\tfrac{X_2}{X_2+X_3}\bigr)}
			{\vect\tp+(0,2,0)},
\end{multline*}
and the remaining two are similar.
Here, the argument of the steady-state miss-rate function $\steadystatemissrate{}$
is a fraction, more precisely, it is the ratio between component $X_2$ and the
``subtotal'' $X_2+X_3$.
This means, the rest of the vector (namely $X_1$) is immaterial for the expectation
and we can \textsl{``zoom''} in using \wref{lem:dirichlet-zoom}:
\begin{align*}
		\dirichletExpectation{\steadystatemissrate{}\bigl(\tfrac{X_2}{X_2+X_3}\bigr)}
				{\vect\tp+(1,1,0)}
	&\wwrel=
		\dirichletExpectation{\steadystatemissrate{}(X_1)}
				{\tp_2+1,\tp_3}
\\	&\wwrel=
		g_{\tp_2+1,\tp_3},
\end{align*}
and similarly for the other terms.

Adding up the four summands, the overall contribution of $\ui C{y2}$ is then
\begin{multline*}
		\E[\ui {C_n}{y2} \steadystatemissrate{}(\btprob{y2})] 
	\wwrel=
	\\
			\biggl(
				      \frac{\tp_1\tp_2}{\rf\kp2} g_{\tp_2+1,\tp_3}
				\bin+ \frac{\tp_1\tp_3}{\rf\kp2} g_{\tp_2,\tp_3+1}
				\bin+ \frac{\rf{\tp_2}2}{\rf\kp2} g_{\tp_2+2,\tp_3}
\\				\bin+ \frac{\tp_2\tp_3}{\rf\kp2} g_{\tp_2+1,\tp_3+1}
			\biggr) n \bin+ \Oh(1).
\end{multline*}

\subsubsection[$C^{y3}$]{\boldmath $\ui C{y3}$.}

The third comparison location is very similar to $\ui C{y1}$
(in fact even a little simpler), so we only give the main steps:
\begin{align*}
		\E[\ui {C_n}{y3} \steadystatemissrate{}(\btprob{y3})] / n
	\mkern-50mu\\
	&\wwrel=
		\dirichletExpectation{X_3 \cdot
			\steadystatemissrate{}(X_1+X_2)}{\vect\tp} \bin+ \Oh(\tfrac1n)
\\	&\wwrel=
		\frac{\tp_3}{\kp} \dirichletExpectation{
					\steadystatemissrate{}(X_1)}{\tp_1+\tp_2,\tp_3+1}
					 \bin+ \Oh(\tfrac1n)
\\	&\wwrel=
		\frac{\tp_3}{\kp} g_{\tp_1+\tp_2,\tp_3+1}  \bin+ \Oh(\tfrac1n).
\end{align*}

\subsubsection[$C^{y4}$]{\boldmath $\ui C{y4}$.}

The last comparison location is similar to $\ui C{y2}$;
the main trick is again to use the \textsl{zooming lemma}.
We compute
\begin{align*}
		\E[\ui {C_n}{y4} \steadystatemissrate{}(\btprob{y4})] / n
	\mkern-130mu\\
	&\wwrel=
		\dirichletExpectation{X_3(X_1+X_2) \cdot
			\steadystatemissrate{}\biggl(\frac{X_1}{X_1+X_2}\biggr)}{\vect\tp}
			 \bin+ \Oh(\tfrac1n)
\\	&\wwrel= \phantom{{}\bin+{}}
		\frac{\tp_1\tp_3}{\rf\kp2} \dirichletExpectation{
			\steadystatemissrate{}\bigl(\tfrac{X_1}{X_1+X_2}\bigr)}{\vect\tp+(1,0,1)}
\\* &\wwrel\ppe {}
		\bin+\frac{\tp_2\tp_3}{\rf\kp2} \dirichletExpectation{
			\steadystatemissrate{}\bigl(\tfrac{X_1}{X_1+X_2}\bigr)}{\vect\tp+(0,1,1)}
		\bin+ \Oh(\tfrac1n)
\\	&\wwrel=
		      \frac{\tp_1\tp_3}{\rf\kp2} g_{\tp_1+1,\tp_2}
		\bin+ \frac{\tp_2\tp_3}{\rf\kp2} g_{\tp_1,\tp_2+1}
		 \bin+ \Oh(\tfrac1n).
\end{align*}

\bigskip\noindent
Adding up the contributions of all four comparison locations, we obtain 
the the value claimed for $a_{\mathrm{YQS}}$ in \wref{thm:main-results-BM}.

\end{document}